\newtheorem{theorem}{Theorem}
\newtheorem{lemma}{Lemma}
\newtheorem{corollary}{Corollary}
\newtheorem{proof}{Proof}
\newtheorem{proposition}{Proposition}
\begin{document}

\title{Physical-Layer Security in Cache-Enabled Cooperative Small Cell Networks Against Randomly Distributed Eavesdroppers}
\author{Tong-Xing~Zheng,~\IEEEmembership{Member,~IEEE,}
Hui-Ming~Wang,~\IEEEmembership{Senior~Member,~IEEE,}
and~Jinhong~Yuan,~\IEEEmembership{Fellow,~IEEE,~}
\thanks{T.-X.~Zheng and H.-M. Wang are with the School of Electronic and Information Engineering,
Xi'an Jiaotong University, Xi'an, Shaanxi 710049, China (zhengtx@mail.xjtu.edu.cn, xjbswhm@gmail.com).
}
\thanks{J. Yuan is with the School of Electrical Engineering and Telecommunications, University of New South Wales, Sydney, NSW 2052, Australia (e-mail: j.yuan@unsw.edu.au).}
}

\maketitle
\vspace{-0.8 cm}

\begin{abstract}
	This paper explores the physical-layer security in a small cell network (SCN) with cooperative cache-enabled small base stations (SBSs) in the presence of {randomly distributed eavesdroppers}. We propose a joint design on the caching placement and the physical-layer transmission to improve the secure content delivery probability (SCDP). We first put forward a hybrid caching placement strategy in which a proportion of the cache unit in each SBS is assigned to store the most popular files (MPFs), while the remaining is used to cache the disjoint subfiles (DSFs) of the less popular files in different SBSs as a means to enhance transmission secrecy and content diversity. 
	We then introduce two coordinated multi-point (CoMP) techniques, namely, joint transmission (JT) and orthogonal transmission (OT), to deliver the MPFs and DSFs, respectively. 
	We derive analytical expressions for the SCDP in each transmission scheme,  considering both non-colluding and colluding eavesdropping scenarios. 
	Based on the obtained analytical results, we jointly design the optimal transmission rates and the optimal caching assignment for maximizing the overall SCDP. Various insights into the optimal transmission and caching designs are further provided. Numerical results are also presented to verify our theoretical findings and to demonstrate the superiority of the proposed caching and transmission strategies.

\end{abstract}

\begin{IEEEkeywords}
    Physical-layer security, wireless caching, small cell networks, cooperative transmissions, stochastic geometry.
\end{IEEEkeywords}

\IEEEpeerreviewmaketitle

\section{Introduction}

\IEEEPARstart{S}{mall} cell network (SCN) is a promising approach to improving network capacity and achieving seamless wireless coverage in the 5G wireless network. Nevertheless, the increasingly dense deployment of SCNs poses a tremendous challenge to the backhaul links and the backhaul capacity has become the major system bottleneck.
To alleviate such bottleneck, wireless caching technique emerges. By pre-storing popular content at the edge of an SCN such as small base stations (SBSs) and reusing the cached content to meet frequent requests from local users, wireless caching is envisioned as an effective solution for relaxing the challenging demand of small cell backhauling and reducing the end-to-end latency \cite{Wang2014Cache}.

{Wireless caching has a serious security vulnerability as any wireless network due to the broadcast nature of the electromagnetic signal propagation. For example, although wireless caching has a great potential to meet the soaring video-on-demand (VoD) streaming traffic in the 5G network \cite{Liu2014Cache,Xiang2018Cache}, the broadcast streaming data by caching nodes are susceptible to potential eavesdroppers such as non-paying subscribers and malicious attackers. 
	It is of great significance to propose security-wise caching schemes that can guarantee both data secrecy and quality of service (QoS). Nonetheless, safeguarding the security for a cache-enabled wireless network is confronted with two major challenges. Specifically, the vast majority of current wireless data services still rely on end-to-end encryption to ensure data secrecy, e.g., the hypertext transfer protocol secure (HTTPS) applied for video streaming applications such as YouTube and Netflix \cite{HTTP}.
	However, such encryption schemes might counteract the benefits of wireless caching in terms of high flexibility and large multiplexing gains since the encrypted content is uniquely defined for each user request and cannot be reused to serve other user requests \cite{Paschos2016Wireless}.} Moreover, using the encryption methods will inevitably introduce a large amount of additional operations in the storage, management and distribution of secret keys, thus degrading the efficiency of content placement and delivery. 
Fortunately, \emph{physical-layer security} (PLS) \cite{Wyner1975Wire-tap}, an information-theoretic approach which has been proven to gain a remarkable secrecy enhancement in various wireless networks \cite{Wang2016Physical_book}-\cite{Liu2017Enhancing}, provides a new opportunity to overcome the above limitations. {PLS achieves wireless secrecy by using the wiretap channel encoding instead of the source encryption such that the cached content still can be reused. Moreover, PLS exploits the randomness inherent to the wireless channels without necessarily relying on secret keys.
All these advantages make PLS and wireless caching easily integrated in a low-complexity and high-flexibility way. 
}

\subsection{Previous Works}
The cache-enabled SCN is first investigated in \cite{Shanmugam2013FemtoCaching}, where wireless caching has been shown to significantly reduce the average downloading delay. In \cite{Liu2014Cache}, \cite{Liu2014Will}, wireless caching has been exploited to improve the energy efficiency of cellular networks. In \cite{Xiang2017Cross}, a joint caching and buffering strategy has been proposed to overcome the backhaul capacity bottleneck and the half-duplex transmission
constraint simultaneously. 
{Considering cooperative transmissions via distributed caching helpers, an optimal caching placement has been designed in \cite{Chae2015Cooperative} as a means to balance the file diversity gain and the cooperation gain. In \cite{Ao2015Distributed,Peng2015Backhaul}, the cache-enabled small cell cooperation in the caching placement (i.e., cache-level cooperation) and the physical-layer transmission (i.e., signal-level cooperation) has been discussed. 
In \cite{Chen2017Cooperative}, the cache- and signal-level cooperation has been leveraged under a combined caching placement scheme to improve the cache service performance and
the energy efficiency.}

Recently, the security issue in the cache-enabled wireless networks has attracted a stream of research, motivated by the coded caching scheme introduced in \cite{Maddah-Ali2014Fundamental}. For example, a coded caching scheme based on Shannon's one-time pad method has been proposed in \cite{Sengupta2015Fundamental} to guarantee information secrecy against eavesdroppers. However, achieving secrecy requires a sufficiently large size of random secret keys, and secure sharing of such massive secret keys will cause a considerable system overhead. This scheme has been extended to the device-to-device networks in \cite{Awan2015Fundamental}, where a sophisticated key generation and encryption scheme has been designed. In \cite{Gerami2015Secure,Gabry2016On}, the security-oriented content placement has been studied based on the maximal distance separable encoding. All these endeavors have dealt with the security
issue from an information-theoretic point of view, but few has considered the characteristics of the physical-layer media.

{PLS in a cache-enabled
	wireless network was not considered until recently. Specifically, the caching-enabled cooperative multi-input multi-output (MIMO) transmission has been first exploited as an effective PLS mechanism to increase the secrecy rate of content delivery, confronting either a single malicious eavesdropper \cite{Xiang2018Cache} or multiple untrusted cache helpers \cite{Xiang2018Secure}.
	Nevertheless, the signaling design therein heavily relies on the estimated instantaneous channel state information (CSI) of the eavesdropper. In practice, it is difficult to estimate such CSI in real time since the eavesdropper usually listens passively.
	Dynamically adjusting the transmission parameters will also increase the system complexity and the end-to-end latency.}
Moreover, the fading feature of the wireless channels and the randomness of the eavesdroppers' locations have significant impacts on the security performance, and meanwhile they can also be exploited to facilitate secure transmissions.
However, these aspects have not yet been investigated in \cite{Xiang2018Cache,Xiang2018Secure}.
{It is worth mentioning that, there has been substantial research on the PLS in random wireless networks with both channel and location uncertainties \cite{Zhou2011Throughput}-\cite{Liu2017Enhancing}.
	Stochastic geometry theory has provided a powerful tool to study the network security performance by modeling the positions of network nodes including the eavesdroppers according to a spatial distribution such as a Poisson point process (PPP) \cite{Haenggi2009Stochastic}.} 
To the best of our knowledge, the potential of PLS in securing the content delivery for a cache-enabled SCN against randomly distributed eavesdroppers is still elusive, and even a fundamental mathematical framework for security performance analysis and optimization from a stochastic geometry perspective is lacking. This has motivated our work.

\subsection{Our Work and Contributions}
In this paper, we will explore the potential of the physical-layer secure transmission in conjunction with caching placement in realizing secure content delivery against randomly distributed eavesdroppers. 
The main contributions of this paper are summarized as follows.
 
\begin{itemize}
	\item We study a combination of cooperations in both cache-level and signal-level. {By cache-level cooperation, we propose a hybrid caching placement strategy based on file partition, where every SBS assigns a proportion of its cache space to store the most popular files (MPFs), while using the remaining to cache the disjoint subfiles (DSFs) of the less popular files as a means to improve content diversity and secrecy. 
		By signal-level cooperation, we put forward two coordinated multi-point (CoMP) techniques, namely, joint transmission (JT) and orthogonal transmission (OT), to deliver the cached MPFs and DSFs, respectively.}
	\item We assess the secure content delivery probability (SCDP), which measures the probability that the reliability and secrecy of content delivery can be guaranteed simultaneously.  
	We provide analytical results for the SCDP in each transmission scheme for both non-colluding and colluding wiretap scenarios. We show that the JT scheme outperforms the OT scheme in terms of transmission reliability, whereas the latter provides a higher level of secrecy.
	
	\item
	We jointly design the optimal transmission rates and the optimal caching assignment proportion to maximize the overall SCDP under the proposed caching and transmission strategies.
	The whole maximization procedure is decomposed into two phases.  First, the optimal transmission rates that globally maximize the SCDP in the JT and OT schemes are obtained by solving a scalar convex problem and by addressing a vector convex problem through an alternating optimization, respectively; subsequently, the optimal caching assignment  proportion that maximizes the overall SCDP is derived in a closed-form expression. Various insights into the optimal transmission and caching designs  are further provided. 
\end{itemize}

\subsection{Organization and Notations}
The remainder of this paper is organized as follows. In Section II, we describe the system model and the underlying optimization problem. In Section III, we analyze
the connection probability and the secrecy probability for both the JT and OT schemes. In Section IV, we design the optimal transmission rates and the optimal caching assignment proportion to maximize the overall SCDP. In Section V, we present numerical results to validate our theoretical analysis. In Section VI, we conclude our work and provide several future research directions.

\emph{Notations}: Bold lowercase letters denote column vectors. $(\cdot)^{\rm T}$, $|\cdot|$, $\|\cdot\|_1$, $\lfloor \cdot\rfloor$, $\mathbb{P}\{\cdot\}$, $\mathbb{E}_A[\cdot]$ denote the operations of transpose, absolute value, L-1 norm, round down, probability and mathematical expectation taken over the random variable (RV) $A$, respectively.
$\Gamma(z)= \int_0^{\infty}e^{-t}t^{z-1}dt$, $\Gamma(a,x) = \int_x^{\infty}e^{-t}t^{a-1}dt$ and ${\rm erf}(z)={2}/{\sqrt{\pi}}\int_0^ze^{-t^2}dt$ are the gamma function \cite[Eqn. (8.310.1)]{Gradshteyn2007Table},
the incomplete gamma function \cite[Eqn. (8.350.2)]{Gradshteyn2007Table}, and the error function \cite[Eqn. (8.250.1)]{Gradshteyn2007Table}, respectively. ${S}\setminus  s_k$ denotes the set obtained by canceling the subset $s_k$ from the superset ${S}$. 

\section{System Model}
\begin{figure}[!t]
	\centering
	\includegraphics[width = 3.5in]{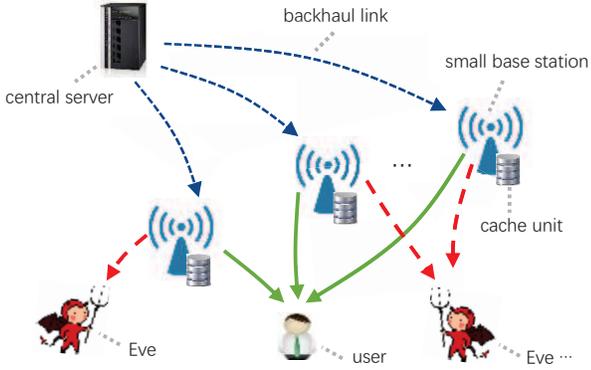}
	\caption{Illustration of a cache-enabled cooperative SCN. The ongoing content delivery from the SBSs to an intended user is overheard by randomly located eavesdroppers.}
	\label{SYSTEM_MODEL}
\end{figure}
We consider security issues in a cache-enabled SCN where a set of SBSs, $\mathcal{K}=[1,\cdots,K]$, cooperatively deliver paid content to a subscriber coexisting with randomly located eavesdroppers (e.g., non-paying subscribers). 
All the SBSs are connected to a central server located at the edge of the core network via wired backhaul links. A cache unit is deployed at each SBS for pre-fetching popular content during off-peak hours from the central server. Once a file requested  by a local user is pre-stored in the cache units, i.e., a \emph{cache hit} event takes place, the file can be delivered to the user by the SBSs directly. Otherwise, a \emph{cache miss} event is deemed to occur and the SBSs will fetch the requested file from the central server before conveying it to the user. We assume that all the network nodes including the SBSs, legitimate users and eavesdroppers each have a single antenna, and only one user can be served in each time slot.
{The locations of the SBSs and of the users are assumed deterministic, while those of eavesdroppers are random and unknown.}
Without loss of generality, we place a typical user at the origin of the polar coordinate, and denote the locations of the $k$-th nearest SBS and of the $j$-th nearest eavesdropper to the typical user as  $\{b_{k}:(r_{b,k},\theta_{b,k})\}$ and $\{e_j:(r_{e,j},\theta_{e,j})\}$, with $r$ and $\theta$ being the corresponding distance and angle, respectively. The distance between the $k$-th SBS and the $j$-th eavesdropper is expressed as $r_{j,k}=\sqrt{r_{b,k}^2+r_{e,j}^2-2r_{b,k}r_{e,j}\cos(\theta_{b,k}-\theta_{e,j})}$. Since the eavesdroppers are possibly randomly located over the entire network, we model their positions as a stationary PPP $\Phi_e$ on the two-dimensional plane $\mathbb{R}^2$ with density $\lambda_e$, i.e., $e_j\in\Phi_e$ \cite{Zhou2011Throughput}-\cite{Liu2017Enhancing}.  

{
To model the wireless channels, including the main channels spanning from the SBSs to the typical user and the wiretap channels spanning from the SBSs to the eavesdroppers, we consider a standard distance-based path loss governed by the exponent $\alpha$ along with Rayleigh fading.
Hence, the channel gains from the $k$-th SBS to the typical user and to the $j$-th eavesdropper can be expressed as $h_{b,k}r_{b,k}^{-\alpha/2}$ and $h_{j,k}r_{j,k}^{-\alpha/2}$, respectively, where $h_{b,k}$ and $h_{j,k}$ denote the independent and identically distributed (i.i.d.) fading coefficients obeying the circularly symmetric complex Gaussian distribution with zero mean and unit variance.
We assume that the SBSs only know the statistic CSI (as opposed to the instantaneous) of the main and the wiretap channels.}\footnote{This is a generic assumption in literature on PLS \cite{Wang2016Physical_book}-\cite{Liu2017Enhancing}. In particular, if the potential eavesdroppers are also regular users but the ongoing content delivery should be kept secret to them, it is possible for the SBSs to acquire the statistic CSI of the eavesdroppers and the distribution of their locations through a large amount of information exchange during other time slots.}

\subsection{Caching Placement Strategy}
The central server owns a library of $N$ equal-size files with $F_n$ being the $n$-th most popular file. We assume that users make request for files independently with some probabilities according to a commonly adopted popularity pattern, i.e., Zipf distribution \cite{Shanmugam2013FemtoCaching}. With the Zipf distribution, the request probability for the $m$-th most popular file is given by  
\begin{equation}\label{zipf}
f_m = \frac{m^{-\gamma}}{\sum_{n=1}^{N}n^{-\gamma}},
\end{equation}
where $\gamma$ models the skewness of the popularity profile. In particular, the popularity profile is uniform over files for $\gamma=0$, and becomes
more concentrated as $\gamma$ increases.

We consider a finite caching capacity that the $K$ SBSs each can store up to $L$ files and the total capacity $KL$ is less than the library size $N$, i.e., $KL<N$.
To make efficient use of the cache units, we propose a hybrid caching placement strategy to distribute files of different popularities to the SBSs, where each file is partitioned into $K$ equal-size subfiles. 
To be specific, we divide the cache unit in every SBS into two portions, where a proportion $\phi$ of the cache unit is used for storing the most popular files (i.e., MPF caching) and the rest proportion $1-\phi$ is reserved for caching the disjoint subfiles of the less popular files in different SBSs (i.e., DSF caching). 
Under such a hybrid caching strategy, file $F_n$ with $1\le n\le\lfloor \phi L\rfloor $ belongs to the MPF group where all the SBSs  possess a copy of it; file $F_n$ with $\lfloor \phi L\rfloor < n \le \lfloor \phi L\rfloor + K(L-\lfloor \phi L\rfloor)$ belongs to the DSF group where the SBSs each store a unique subfile of it; file $F_n$ with $ n > \lfloor \phi L\rfloor + K(L-\lfloor \phi L\rfloor)$ is not cached and only can be fetched from the central server.

{It is worth noting that, the DSF caching mode helps to enhance  content diversity, i.e., allowing more files to be cached with a finite caching capacity. This can increase the cache hit probability and avoid frequently fetching content from the central server, thus making the communication more efficient. 
Furthermore, the different subfiles of each file are elaborately encoded such that one only can decode them in a sequential manner \cite{Ohm2005Advances}. Hence, the risk of the entire file being illegally intercepted can be significantly lowered since it is difficult to decode all the subfiles simultaneously. In other words, content diversity can guarantee a higher level of secrecy.}

\subsection{Cooperative Transmission Schemes}
When the SBSs receive a file request, they adopt different transmission schemes depending on the popularity of the requested file. We propose two CoMP schemes, namely, JT and OT schemes, which are described as below. 

\subsubsection{Joint Transmission (JT)}
When the requested file $F_n$ belongs to the MPF group, the SBSs each have a copy of it. In order to enhance transmission reliability, the SBSs deliver the same file to the user simultaneously. We refer to this scheme as the JT scheme.\footnote{{Without the instantaneous CSI of the main channels, the JT scheme corresponds to a non-coherent multi-point joint transmission. If the instantaneous CSI is available, a coherent joint transmission or distributed beamforming can be realized to further improve transmission reliability, which however will increase the system overhead.}} The received signal-to-noise ratios (SNRs) at the user and at the $j$-th eavesdropper respectively can be given by
\begin{equation}\label{snr_o_jt}
{\rm SNR}_b = \rho\left|\sum_{k=1}^K h_{b,k}r_{b,k}^{-\alpha/2}\right|^2,
\end{equation}
\begin{equation}\label{snr_e_jt}
{\rm SNR}_j = \rho\left|\sum_{k=1}^K h_{j,k}r_{j,k}^{-\alpha/2}\right|^2,~~~\forall e_j\in\Phi_e,
\end{equation}
{where $\rho = {P}/{(WN_0)}$ denotes the normalized SNR, with $P$, $W$ and $N_0$ being the SBS transmit power, the total available bandwidth and the noise spectral density, respectively.} We consider identical noise spectral density at all the receivers. 
\subsubsection{Orthogonal Transmission (OT)}
When the requested file $F_n$ falls within the DSF group, the SBSs have different subfiles of it. In order
to avoid the co-channel interference, the SBSs use orthogonal
frequency spectrum to deliver these subfiles, i.e., each SBS occupies $1/K$ of the overall bandwidth. We name this scheme the OT scheme. For the subfile delivered from the $k$-th SBS, the received SNRs at the user and at the $j$-th eavesdropper respectively can be given by
\begin{equation}\label{snr_o_ot}
{\rm SNR}_{b,k} =  {K\rho}\left|h_{b,k}\right|^2r_{b,k}^{-\alpha},
\end{equation}
\begin{equation}\label{snr_e_ot}
{\rm SNR}_{j,k} =  {K\rho}\left|h_{j,k}\right|^2r_{j,k}^{-\alpha},~~~\forall e_j\in\Phi_e.
\end{equation}
Note that the factor $K$ exists due to a $1/K$ decrement of the
available bandwidth for each SBS.

\subsubsection{Cache Miss (CM)}
When the requested file $F_n$ is not cached by the SBSs, i.e., a cache miss event occurs, all the SBSs fetch the requested file from the central server and then deliver it to the user simultaneously. Hence, the received SNRs at the user and at the $j$-th eavesdropper share the same expressions as \eqref{snr_o_jt} and \eqref{snr_e_jt}, respectively. For ease of statement, we call this scheme the CM scheme. The wired backhauling process is assumed to be secure, whereas it causes extra delivery delay compared with the cache hit case, which will result in a lower  end-to-end rate. This negative impact will be considered in the subsequent performance analysis and optimization.

\subsection{Performance Metrics}
To securely deliver the content, the well-known Wyner's wiretap encoding scheme is adopted to encode the confidential information, where redundant information is intentionally embedded to confuse eavesdroppers \cite{Wyner1975Wire-tap}. The transmission rates of
the confidential information and the redundant information are
referred to as the secrecy rate $R_s$ and the redundant rate $R_e$,
and the wiretap codeword rate is $R_t = R_s + R_e$. 
{We denote the achievable rates of the main channel and the wiretap channel as $C_b=\log_2(1+{\rm SNR}_b)$ and $C_e=\log_2(1+{\rm SNR}_e)$, respectively, with ${\rm SNR}_b$ and ${\rm SNR}_e$ being the corresponding SNRs.\footnote{{All the rate parameters in this paper are measured in the unit: bits/s/Hz.}} If $C_b$ is larger than or equal to $R_t$, the legitimate user can recover the confidential information. The probability that this event happens is called the connection probability, which is defined as $p_c \triangleq \mathbb{P}\{C_b\ge R_t\}$ \cite{Wang2016Physical}.
If $C_e$ is lower than or equal to $R_e$, no confidential information will be leaked to the eavesdropper. 
The probability of this event occurs is termed the secrecy probability, which is defined as $p_s \triangleq \mathbb{P}\{C_e\le R_e\}$ \cite{Wang2016Physical}. }

The content delivery is secure only if the reliability (connection) and the secrecy are guaranteed simultaneously. In this paper, we employ the metric named SCDP to quantify the probability that a secure content delivery event occurs. For a specific transmission scheme, the SCDP can be defined as the product of the connection probability and the secrecy probability, i.e., $\mathcal{P}^{\rm S}_{scd} \triangleq p^{\rm S}_{c}p^{\rm S}_{s}$ for $\rm S\in\{ JT,OT,CM\}$. Therefore, the overall SCDP under the proposed caching and transmission strategies can be expressed as 
\begin{equation}\label{def_overall_psct}
\mathcal{P}_{scd} = \sum_{\rm S\in\{ JT,OT,CM\}}p_{tr}^{\rm S}(\phi)~ \mathcal{P}^{\rm S}_{scd},
\end{equation}
where $p_{tr}^{\rm S}(\phi)$ denotes the probability of the scheme $\rm S\in\{ JT,OT,CM\}$ being adopted for content delivery, which can be calculated from \eqref{zipf}, i.e., 
\begin{subequations}
	\begin{align}\label{p_jt}
	&p_{tr}^{\rm JT}(\phi) = \sum_{n=1}^{\lfloor{\phi L}\rfloor}f_n,~~p_{tr}^{\rm OT}(\phi) = \sum_{\lfloor{\phi L}\rfloor+1}^{\lfloor \phi L\rfloor + K(L-\lfloor \phi L\rfloor)}f_n,
	\\\label{p_cm}~&p_{tr}^{\rm CM}(\phi) =1 -  p_{tr}^{\rm JT}(\phi) - p_{tr}^{\rm OT}(\phi).
	\end{align}
\end{subequations}

We emphasize that the redundant rate $R_e$ (for a target secrecy rate $R_s$) and the caching assignment proportion $\phi$ play critical roles in increasing the SCDP. Specifically, $R_e$ triggers a trade-off between transmission reliability and secrecy.
Choosing a larger $R_e$ increases the secrecy probability $p_s$ but decreases the connection probability $p_c$ as the wiretap codeword rate $R_t$ also increases.
Likewise, $\phi$ strikes a non-trivial trade-off between transmission reliability, secrecy and content diversity. On one hand, devoting a larger proportion $\phi$ for MPFs increases the probability of adopting the JT scheme and thus enhancing transmission reliability. On the other hand, assigning a larger proportion for DSFs (i.e., a smaller $\phi$) increases the probability of using the OT scheme and thus improving transmission secrecy and content diversity. The overall balance of these opposite impacts on the SCDP should be carefully addressed. 

In the following sections, we will first derive the connection probability $p_c$ and the secrecy probability $p_s$ for various transmission schemes, and then we will design the optimal redundant rate $R_e$ and the optimal caching assignment proportion $\phi$ to maximize the overall SCDP $\mathcal{P}_{scd}$.

\section{Connection and Secrecy Probabilities Analysis}
Since the SCDP is represented as the product of the connection probability $p_c$ and the secrecy probability $p_s$, this section will analyze $p_c$ and $p_s$ for the JT and OT schemes, respectively. 
The analysis for the CM scheme is similar as that for the JT scheme, and the only difference lies in the backhaul delay caused by the content fetching process in the former. Due to the extra delay, the actual delivery time in the CM scheme reduces and hence the required secrecy rate increases from $R_s$ (for the JT scheme) to $\delta R_s$, where $\delta>1$ captures the impact of backhaul delay. 

{It is worth mentioning that, for a fair comparison between the JT and OT schemes, we consider identical target secrecy rate $R_s$ in both schemes. In practice, the secrecy rate can correspond to the end-to-end rate of a specific service requested by subscribers such that its value may be pre-established.}
Furthermore, all the SBSs should set a same wiretap codeword rate $R_t$ and a same redundant rate $R_e$ in the JT scheme due to the joint transmission of a same file; whereas they can choose different rates in the OT scheme due to the orthogonal transmission of different subfiles. 
{This also means that, the wiretap codeword rates (also the redundant rates) in the JT and OT schemes are not necessarily the same and actually their values should be properly designed for maximizing the SCDP as will be discussed in Sec. IV.} 

\subsection{Connection Probability}

\subsubsection{JT Scheme}
In this case, the achievable rate of the main channel is $C_b = \log_2\left(1+{\rm SNR}_b\right)$ with  ${\rm SNR}_b$ given in \eqref{snr_o_jt}. Let $R_t = \log_2\left(1+\beta_t\right)$ be the wiretap codeword rate with $\beta_t\triangleq 2^{R_t}-1$. The connection probability can be expressed as $p_c^{\rm JT} = \mathbb{P}\{{\rm SNR}_b\ge \beta_t\}$, which is calculated as 
\begin{equation}\label{pc_jt}
p_c^{\rm JT} =\mathbb{P}\left\{\left|\sum_{k=1}^K h_{b,k}r_{b,k}^{-\alpha/2}\right|^2\ge \frac{\beta_t}{\rho}\right\}\stackrel{\mathrm{(a)}}=e^{-\frac{\beta_t/\rho}{\sum_{k=1}^Kr_{b,k}^{-\alpha} }},
\end{equation}
where step $\rm (a)$ follows from the fact that $\left|\sum_{k=1}^K h_{b,k}r_{b,k}^{-\alpha/2}\right|^2$ is exponentially distributed with mean ${\sum_{k=1}^Kr_{b,k}^{-\alpha}}$.

\subsubsection{OT Scheme}
In this case, the achievable rate of the main channel of the $k$-th SBS is $C_{b,k}=\log_2\left(1+{\rm SNR}_{b,k}\right)$ with ${\rm SNR}_{b,k}$ given in \eqref{snr_o_ot}. Let $R_{t,k}=\log_2\left(1+\beta_{t,k}\right)$ be the wiretap codeword rate of the $k$-th SBS {with $\beta_{t,k}=2^{R_{t,k}}-1$}. 
Since the entire file can be recovered only if all the subfiles have already been decoded. Therefore, the connection probability can be expressed as $p_c^{\rm OT} =\mathbb{P}\left\{\bigcap_{k\in\mathcal{K}}{\rm SNR}_{b,k}\ge \beta_{t,k}\right\} $, which is calculated as 
\begin{equation}\label{pc_ot}
p_c^{\rm OT}=\prod_{k=1}^K\mathbb{P}\left\{\left| h_{b,k}\right|^2r_{b,k}^{-\alpha}\ge \frac{\beta_{t,k}}{K\rho}\right\}= e^{-\frac{\sum_{k=1}^Kr_{b,k}^{\alpha}\beta_{t,k}}{K\rho}}.
\end{equation}

Comparing \eqref{pc_ot} with \eqref{pc_jt}, we find that if $\beta_t=\beta_{t,k}$ for $k\in\mathcal{K}$, i.e., the same wiretap codeword rate is used in the JT and OT schemes, we have $p_c^{\rm JT}\ge p_c^{\rm OT}$ as ${1}/{ \sum_{k=1}^Kr_{b,k}^{-\alpha}}\le\min_{k\in\mathcal{K}}r_{b,k}^{\alpha}\le\sum_{k=1}^Kr_{b,k}^{\alpha}/K$. This shows the superiority of the JT scheme in terms of transmission reliability.

\subsection{Secrecy Probability}
We consider both non-colluding and colluding eavesdropping (NCE/CE) scenarios. In the NCE case, eavesdroppers individually decode the confidential information and thus the content can be delivered secretly if only the achievable rate of the most deteriorate eavesdropper $\max_{e_j\in\Phi_e}C_{j}$ does not exceed the redundant rate $R_e$, i.e., $\max_{e_j\in\Phi_e}C_{j}\le R_e$. In the CE case, eavesdroppers jointly decode the confidential information using the maximal ratio combination (MRC) method. The content delivery is deemed to be secret only if the achievable rate of the equivalent wiretap channel $C_{e}$ does not lie beyond the redundant rate $R_e$, i.e., $C_{e}\le R_e$. 

\subsubsection{JT Scheme for NCE Case}
In this case, the achievable rate of the $j$-th eavesdropper is $C_{j}=\log_2\left(1+{\rm SNR}_{j}\right)$ with ${\rm SNR}_{j}$ given in \eqref{snr_e_jt}. Let $R_{e} = \log_2\left(1+\beta_{e}\right)$ be the redundant rate with $\beta_e\triangleq2^{R_e}-1$. The secrecy probability can be expressed as $p_{s,nce}^{\rm JT}=\mathbb{P}\left\{\max_{e_j\in\Phi_e}{\rm SNR}_{j}\le \beta_e\right\}$, which is calculated as 
\begin{align}\label{ps_jt_nce}
p_{s,nce}^{\rm JT} &
=\mathbb{E}_{\Phi_e}\left[\prod_{e_j\in\Phi_e}\mathbb{P}\left\{\left|\sum_{k=1}^K h_{j,k}r_{j,k}^{-\alpha/2}\right|^2\le\frac{\beta_e}{\rho}\right\}\right]\nonumber\\
& \stackrel{\mathrm{(b)}}=\mathbb{E}_{\Phi_e}\left[\prod_{e_j\in\Phi_e}\left(1-e^{-\frac{\beta_e/\rho}{\sum_{k=1}^Kr_{k}^{-\alpha} }}\right)\right]\nonumber\\
&
\stackrel{\mathrm{(c)}}=\exp\left(-2\lambda_e\int_0^\infty\int_0^{\pi}e^{-\frac{\beta_e/\rho}{\sum_{k=1}^Kr_{k}^{-\alpha} }}rdrd\theta\right),
\end{align}
where step $\rm (b)$ follows from knowing that $\left|\sum_{k=1}^K h_{j,k}r_{j,k}^{-\alpha/2}\right|^2$ is an exponential RV with mean ${\sum_{k=1}^Kr_{j,k}^{-\alpha}}$, and step $\rm (c)$ holds by using the probability generating functional (PGFL) over a PPP \cite{Chiu2013Stochastic} with
$r_{k}=\sqrt{r_{b,k}^2+r^2-2r_{b,k}r\cos(\theta_{b,k}-\theta)}$. Although the result in \eqref{ps_jt_nce} does not appear in a closed form, the integral therein is fairly easy to compute.

\subsubsection{OT Scheme for NCE Case}
In this case, the achievable rate of the wiretap channel from the $k$-th SBS to the $j$-th eavesdropper is $C_{j,k}=\log_2\left(1+{\rm SNR}_{j,k}\right)$ with ${\rm SNR}_{j,k}$ given in \eqref{snr_e_ot}. Let $R_{e,k}=\log_2\left(1+\beta_{e,k}\right)$ be the redundant rate of the $k$-th SBS {with $\beta_{e,k}\triangleq2^{R_{e,k}}-1$}. Note that a file is intercepted only if all its subfiles have already been intercepted. Then, the secrecy probability can be given by $p_{s,nce}^{\rm OT}
=\mathbb{P}\left\{\bigcup_{k\in\mathcal{K}}{\rm SNR}_{j,k}\le \beta_{e,k}, \forall e_j\in\Phi_e\right\}$, which is calculated as 
\begin{align}\label{ps_ot_nce}
p_{s,nce}^{\rm OT}&
=\mathbb{E}_{\Phi_e}\left[\prod_{e_j\in\Phi_e}\left(1-\prod^K_{k=1}\mathbb{P}\left\{{\rm SNR}_{j,k}> \beta_{e,k} \right\}\right)\right]\nonumber\\
&=\mathbb{E}_{\Phi_e}\left[\prod_{e_j\in\Phi_e}\left(1-e^{-\frac{\sum_{k=1}^Kr_{k}^{\alpha}\beta_{e,k}}{K\rho }}\right)\right]\nonumber\\
&=\exp\left(-2\lambda_e\int_0^\infty\int_0^{\pi}e^{-\frac{\sum_{k=1}^Kr_{k}^{\alpha}\beta_{e,k}}{K\rho }}rdrd\theta\right).
\end{align}

Comparing \eqref{ps_ot_nce} with \eqref{ps_jt_nce}, if the same redundant rate is employed in the JT and OT schemes, i.e., $\beta_e = \beta_{e,k}$ for $k\in\mathcal{K}$, we have $p_{s}^{\rm OT}\ge p_{s}^{\rm JT}$ since ${1}/{ \sum_{k=1}^Kr_{k}^{-\alpha}}\le\min_{k\in\mathcal{K}}r_{k}^{\alpha}<\sum_{k=1}^Kr_{k}^{\alpha}/K$. This means the OT scheme provides a higher level of secrecy than does the JT scheme.
For the special case with $\alpha = 2$, we can obtain a more concise expression for $p_{s,nce}^{\rm OT}$ as given below
\begin{equation}
p_{s,\alpha=2}^{{\rm OT}} = \exp\left(-{\rho\lambda_e}\left({\pi}+{{Z}_K}\right)e^{-\frac{\sum_{k=1}^Kr_{b,k}^2\beta_{e,k}}{K\rho}}\right),
\end{equation}
where ${Z}_K=\int_0^{\pi}{\sqrt{\pi}z}e^{z^2}\left[1+{{\rm erf}(z)}\right]d\theta$ with $z = \sum_{k=1}^Kr_{b,k}\cos(\theta_{b,k}-\theta)/({K\sqrt{\rho}})$.

\subsubsection{JT Scheme for CE Case}
In this case, the achievable rate of the equivalent wiretap channel is $C_e = \log_2\left(1+\sum_{e_j\in\Phi_e}{\rm SNR}_{j}\right)$ with ${\rm SNR}_{j}$ given in \eqref{snr_e_jt}. Let $I_e = \sum_{e_j\in\Phi_e}{\rm SNR}_{j}$. Then, the secrecy probability can be expressed as the cumulative distribution function (CDF) of $I_e$, i.e., $p_{s,ce}^{\rm JT} = \mathbb{P}\left\{I_e\le \beta_e\right\}$. In order to compute $p_{s,ce}^{\rm JT}$, we first calculate the Laplace transform of $I_e$.
\begin{lemma}\label{lemma_1}
	The Laplace transform of $I_e$ evaluated at value $s$ is given by
	\begin{equation}\label{laplace_lemma}
	\mathcal{L}_{I_e}(s)  =\exp\left(-2\lambda_e\int_0^{\infty}\int_0^{\pi}\frac{s\rho\sum_{k=1}^Kr_{k}^{-\alpha}}{1+s\rho\sum_{k=1}^Kr_{k}^{-\alpha}}rdrd\theta\right),
	\end{equation}
	where $r_k$ shares the same expression as \eqref{ps_jt_nce}.
\end{lemma}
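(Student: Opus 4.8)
The plan is to compute $\mathcal{L}_{I_e}(s) = \mathbb{E}\left[e^{-sI_e}\right]$ where $I_e = \sum_{e_j\in\Phi_e}{\rm SNR}_j$ is a shot-noise-type functional of the PPP $\Phi_e$. First I would write the aggregate interference as a sum over the point process and exploit independence: conditioned on the eavesdropper locations, the fading terms $h_{j,k}$ are i.i.d.\ complex Gaussian, so the key quantity to average is $\mathbb{E}_{h}\left[e^{-s\rho\left|\sum_{k=1}^K h_{j,k}r_k^{-\alpha/2}\right|^2}\right]$ for a single eavesdropper at location $(r,\theta)$. Since $\left|\sum_{k=1}^K h_{j,k}r_k^{-\alpha/2}\right|^2$ is exponentially distributed with mean $\sigma_j^2 \triangleq \sum_{k=1}^K r_k^{-\alpha}$ (exactly the fact invoked in step (b) of \eqref{ps_jt_nce}), this conditional expectation is the Laplace transform of an exponential RV, which evaluates in closed form to $\frac{1}{1+s\rho\,\sigma_j^2}$.

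The second step is to apply the probability generating functional (PGFL) of the PPP, exactly as in step (c) of \eqref{ps_jt_nce}. For a stationary PPP $\Phi_e$ of density $\lambda_e$ on $\mathbb{R}^2$ and a measurable function $0\le g(x)\le 1$, the PGFL gives $\mathbb{E}\left[\prod_{x\in\Phi_e} g(x)\right] = \exp\left(-\lambda_e\int_{\mathbb{R}^2}\left(1-g(x)\right)dx\right)$. Taking $g(r,\theta) = \frac{1}{1+s\rho\sum_{k=1}^K r_k^{-\alpha}}$, so that $1-g(r,\theta) = \frac{s\rho\sum_{k=1}^K r_k^{-\alpha}}{1+s\rho\sum_{k=1}^K r_k^{-\alpha}}$, and converting the area integral to polar coordinates $dx = r\,dr\,d\theta$ with the integrand depending on $\theta$ only through $r_k$ — which, being a cosine of the angle difference, is symmetric about $\theta_{b,k}$ — lets me fold the $\theta$-integral from $[0,2\pi)$ down to $2\int_0^{\pi}$, yielding precisely \eqref{laplace_lemma}.

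The only mild subtlety — and the step I would treat most carefully — is justifying the interchange of expectation over fading and the product over $\Phi_e$ before applying the PGFL, i.e.\ that $\mathbb{E}_{\Phi_e,\{h\}}\left[\prod_j e^{-s\rho|\cdot|^2}\right] = \mathbb{E}_{\Phi_e}\left[\prod_j \mathbb{E}_h\left[e^{-s\rho|\cdot|^2}\right]\right]$. This follows from the conditional independence of the fading blocks across distinct eavesdroppers given $\Phi_e$, combined with Fubini/Tonelli (all terms are nonnegative and bounded by $1$). One should also note that $\mathcal{L}_{I_e}(s)$ is well defined for $\mathrm{Re}(s)\ge 0$ since $I_e\ge 0$; the displayed integral converges because $1-g = O(r^{-\alpha})$ as $r\to\infty$ for $\alpha>2$ (and near the origin $1-g\le 1$ is harmless against the $r\,dr$ measure), so no divergence issues arise. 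With these points dispatched, the derivation is a direct two-line computation — conditional Laplace transform of an exponential, then PGFL — and assembling \eqref{laplace_lemma} is immediate.
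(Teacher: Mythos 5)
Your proposal is correct and follows essentially the same route as the paper's proof: move the fading expectation inside the product over $\Phi_e$ by conditional independence, evaluate the Laplace transform of the exponential RV $\left|\sum_{k=1}^K h_{j,k}r_{j,k}^{-\alpha/2}\right|^2$ to get $\left(1+s\rho\sum_{k}r_{j,k}^{-\alpha}\right)^{-1}$, and then apply the PGFL in polar coordinates. The additional remarks on Fubini/Tonelli and integral convergence are sound but not needed beyond what the paper already asserts.
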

\begin{proof}
	Recalling \eqref{snr_e_jt}, the Laplace transform $\mathcal{L}_{I_e}\left(s\right) = \mathbb{E}_{I_e}\left[e^{-sI_e}\right]$ can be calculated as 
	\begin{align}\label{laplace}
	\mathcal{L}_{I_e}(s)  &\stackrel{\mathrm{(d)}}=\mathbb{E}_{\Phi_e}\left[\prod_{e_j\in\Phi_e}\mathbb{E}_{h_{j,k}}\left[e^{-s\rho\left|\sum_{k=1}^K h_{j,k}r_{j,k}^{-\alpha/2}\right|^2}\right]\right]
	\nonumber\\
	&\stackrel{\mathrm{(e)}}=\mathbb{E}_{\Phi_e}\left[\prod_{e_j\in\Phi_e}\frac{1}{1+s\rho\sum_{k=1}^Kr_{j,k}^{-\alpha}}\right],
	\end{align}
	where step $\rm{(d)}$ follows from the independence between channel fading and the PPP such that the expectation over $h$ can be moved inside the product; step $\rm (e)$ is obtained by calculating the Laplace transform of an exponential RV $\left|\sum_{k=1}^K h_{j,k}r_{j,k}^{-\alpha/2}\right|^2$. Applying the PGFL over a PPP with \eqref{laplace} completes the proof.
\end{proof}

It is intractable to give a closed form for the exact $p_{s,ce}^{\rm JT}$. In the following theorem, we resort to a widely used approximation method \cite{Singh2015Tractable} and derive a closed-form approximation for  $p_{s,ce}^{\rm JT}$.
\begin{theorem}\label{theorem_1}
	The secrecy probability in the JT scheme for the CE case satisfies
\begin{equation}\label{ps_jt_ce}
p_{s,ce}^{\rm JT} \lessapprox  \sum_{m=1}^M {M\choose m}(-1)^{m+1}\mathcal{L}_{I_e}\left(\frac{m\xi}{\beta_e}\right),
\end{equation}
where $\xi \triangleq M(M!)^{-1/M}$ and $M$ is the number of terms used in the approximation.
\end{theorem}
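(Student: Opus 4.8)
The plan is to recognize $p_{s,ce}^{\rm JT} = \mathbb{P}\{I_e \le \beta_e\}$ as the CDF of the nonnegative random variable $I_e$ evaluated at $\beta_e$, and to approximate this CDF by a finite linear combination of Laplace-transform values via the standard trick of bounding the unit step by a power of an exponential. Concretely, I would start from the elementary inequality
\begin{equation}\nonumber
\mathbf{1}\{x \le \beta_e\} \le \left(1 - e^{-\xi x/\beta_e}\right)^{-M} e^{-\xi x/\beta_e} \cdot (\text{something}),
\end{equation}
but more cleanly: the standard approach (as in \cite{Singh2015Tractable}) uses that for a normalized gamma random variable $g$ with shape $M$, the bound $\mathbf{1}\{x \ge 1\} \gtrapprox (1 - e^{-\xi x})^M$ holds with $\xi = M(M!)^{-1/M}$, becoming exact as $M \to \infty$. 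Applying this with $x = I_e/\beta_e$, I would write $\mathbb{P}\{I_e \le \beta_e\} = 1 - \mathbb{P}\{I_e/\beta_e \ge 1\} \lessapprox 1 - \mathbb{E}\big[(1 - e^{-\xi I_e/\beta_e})^M\big]$.

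Next I would expand $(1 - e^{-\xi I_e/\beta_e})^M$ by the binomial theorem:
\begin{equation}\nonumber
\left(1 - e^{-\xi I_e/\beta_e}\right)^M = \sum_{m=0}^M \binom{M}{m}(-1)^m e^{-m\xi I_e/\beta_e},
\end{equation}
take expectations term by term, and identify $\mathbb{E}\big[e^{-m\xi I_e/\beta_e}\big] = \mathcal{L}_{I_e}(m\xi/\beta_e)$ using Lemma~\ref{lemma_1}. The $m=0$ term contributes $1$, which cancels the leading $1$; pulling the sign $(-1)^m$ out and reindexing gives $p_{s,ce}^{\rm JT} \lessapprox \sum_{m=1}^M \binom{M}{m}(-1)^{m+1}\mathcal{L}_{I_e}(m\xi/\beta_e)$, which is exactly \eqref{ps_jt_ce}. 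The substitution of the explicit expression for $\mathcal{L}_{I_e}$ from Lemma~\ref{lemma_1} is then optional (the statement leaves it in terms of $\mathcal{L}_{I_e}$).

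The main subtlety — and the step I would flag as the real content — is justifying the approximation $\mathbf{1}\{x \ge 1\} \approx (1 - e^{-\xi x})^M$ and in particular the direction of the inequality that yields "$\lessapprox$" rather than an exact identity. This rests on the fact that $(1 - e^{-\xi x})^M$ is the CDF at $x$ of $\max$ of $M$ i.i.d.\ unit-mean exponentials (equivalently a specific approximation to a unit step tightening as $M\to\infty$), and that the constant $\xi = M(M!)^{-1/M}$ is chosen so the approximant matches the step function in an appropriate (mean/area) sense. I would cite \cite{Singh2015Tractable} for this bound rather than re-derive it, state clearly that the relation is asymptotically exact as $M$ grows, and note that the interchange of expectation and the finite binomial sum is trivially valid since the sum has finitely many bounded terms. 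A brief remark that each $\mathcal{L}_{I_e}(m\xi/\beta_e)$ is finite and computable (via Lemma~\ref{lemma_1}) completes the argument.
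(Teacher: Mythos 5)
Your proposal is correct and follows essentially the same route as the paper: replace the unit threshold by a normalized gamma random variable and invoke the Alzer-type bound from \cite{Singh2015Tractable} to get $p_{s,ce}^{\rm JT} \lessapprox 1 - \mathbb{E}_{I_e}\bigl[\bigl(1-e^{-\xi I_e/\beta_e}\bigr)^M\bigr]$, then binomially expand and identify each term with $\mathcal{L}_{I_e}(m\xi/\beta_e)$, the $m=0$ term cancelling the leading $1$. The only cosmetic difference is that the paper explicitly credits the inequality $\mathbb{P}\{\iota\ge z\}\le 1-[1-e^{-\xi z}]^M$ to Alzer \cite{Alzer1997On}, whereas you fold it into the citation of \cite{Singh2015Tractable}; the substance is identical.
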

{\begin{proof}
The secrecy probability $p_{s,ce}^{\rm JT} = \mathbb{P}\left\{I_e\le \beta_e\right\}$ can be calculated as follows,
	\begin{align}\label{ps_jt_ce_proof}
	p_{s,ce}^{\rm JT} 
&	=\mathbb{P}\left\{{I_e}/{\beta_e}\le 1 \right\}\stackrel{\mathrm{(f)}}\approx
	\mathbb{P}\left\{{I_e}/{\beta_e}\le \iota \right\}\nonumber\\
	&	\stackrel{\mathrm{(g)}}\lessapprox 1 - \mathbb{E}_{I_e}\left[\left(1 - e^{-{\xi I_e}/{\beta_e}}\right)^M\right],
	\end{align}
	{where the dummy variable $\iota$ in step $\rm (f)$ is a normalized gamma RV with the PDF $f_{\iota}(x)=x^{M-1}e^{-x}/\Gamma(M)$, and this step follows from the fact that $\iota$ converges to identity as $M$ approaches infinity \cite{Singh2015Tractable}; step $\rm (g)$ yields an upper bound by invoking Alzer's inequality \cite{Alzer1997On}, i.e., $\mathbb{P}\{\iota\ge z\}\le 1 - \left[1-e^{-\xi z}\right]^M$ for a constant $z>0$.} Using the binomial expansion with \eqref{ps_jt_ce_proof} and substituting in the Laplace transform $\mathcal{L}_{I_e}\left(s\right) = \mathbb{E}_{I_e}\left[e^{-sI_e}\right]$ with $s={m\xi}/{\beta_e}$ completes the proof.
\end{proof}}

\subsubsection{OT Scheme for CE Case}
In this case, the achievable rate of the equivalent wiretap channel from the $k$-th SBS to the colluding eavesdroppers is $C_{e,k}=\log_2\left(1+\sum_{e_j\in\Phi_e}{\rm SNR}_{j,k}\right)$ with ${\rm SNR}_{j,k}$ given in \eqref{snr_e_ot}. The secrecy probability can be interpreted as the complement of the probability that all the subfiles are intercepted by the eavesdroppers, which is expressed as 
\begin{equation}\label{ps_ot_ce}
p_{s,ce}^{\rm OT} = 1 - 
\mathbb{E}_{\Phi_e}\left[\prod_{k=1}^K\mathbb{P}\left\{\sum_{e_j\in\Phi_e}|h_{j,k}|^2r_{j,k}^{-\alpha}>\frac{\beta_{e,k}}{K\rho}\right\}\right].
\end{equation}

 {Since the locations of the SBSs are deterministic, the distances between the $j$-th eavesdropper and any two SBSs actually are not independent \cite{Zhang2017Energy}.} 
Hence, the expectation over the PPP $\Phi_e$ in \eqref{ps_ot_ce} cannot be moved inside the product, which makes $p_{s,ce}^{\rm OT}$ difficult to compute. To facilitate the calculation, we first consider a disc $\mathcal{B}(o,R)$ centered at the origin $o$ with a radius $R$ and let $\Phi_{e}^R \triangleq \Phi_e\cap\mathcal{B}_{o,R}$ denote the location set of the eavesdroppers residing in the disc $\mathcal{B}_{o,R}$. We then calculate the inner probability in \eqref{ps_ot_ce} resorting to a common gamma approximation \cite{Heath2013Modeling}. Specifically, we approximate the term $X_k = \sum_{e_j\in\Phi_{e}^R}|h_{j,k}|^2r_{j,k}^{-\alpha}$ as a gamma RV, the probability density function (PDF) of which is given by
\begin{equation}\label{pdf_gamma}
f_{X_k}\left(x_k;\upsilon_k,\tau_k\right)=\frac{x_k^{\upsilon_k-1}e^{-\frac{x_k}{\tau_k}}}{\tau_k^{\upsilon_k}\Gamma(\upsilon_k)}.
\end{equation}
The parameters $\upsilon_k$ and $\tau_k$ can be derived from matching the first and second moments of $X_k$, which are given in the following lemma with the detailed calculation relegated to Appendix \ref{appendix_lemma_2}. 
\begin{lemma}\label{lemma_2}
	For fixed positions of eavesdroppers in the disc $\mathcal{B}(o,R)$, $\upsilon_k$ and $\tau_k$ are given by
	\begin{equation}\label{upsilon_k}
	\upsilon_k= \frac{\left(\sum_{e_i\in\Phi_{e}^R}r_{i,k}^{-\alpha}\right)^2}{\sum_{e_j\in\Phi_{e}^R}r_{j,k}^{-2\alpha}},~\tau_k = \frac{\sum_{e_i\in\Phi_{e}^R}r_{i,k}^{-2\alpha}}{\sum_{e_j\in\Phi_{e}^R}r_{j,k}^{-\alpha}}.
	\end{equation}
\end{lemma}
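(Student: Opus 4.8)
The plan is to derive the gamma-approximation parameters $\upsilon_k$ and $\tau_k$ by the standard moment-matching recipe: for a gamma RV with shape $\upsilon_k$ and scale $\tau_k$, the mean is $\upsilon_k\tau_k$ and the second moment is $\upsilon_k(\upsilon_k+1)\tau_k^2$, so once I have the first two moments of $X_k = \sum_{e_j\in\Phi_{e}^R}|h_{j,k}|^2r_{j,k}^{-\alpha}$ for fixed eavesdropper positions, inverting these two relations yields $\upsilon_k$ and $\tau_k$ in closed form. Concretely, $\upsilon_k = \mathbb{E}[X_k]^2/\big(\mathbb{E}[X_k^2]-\mathbb{E}[X_k]^2\big)$ and $\tau_k = \big(\mathbb{E}[X_k^2]-\mathbb{E}[X_k]^2\big)/\mathbb{E}[X_k]$, i.e. variance over mean.

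First I would compute $\mathbb{E}[X_k]$. Conditioning on the (now deterministic) positions, $X_k$ is a finite sum over $e_j\in\Phi_{e}^R$ of independent terms $|h_{j,k}|^2 r_{j,k}^{-\alpha}$. Since $|h_{j,k}|^2$ is a unit-mean exponential RV, linearity of expectation gives $\mathbb{E}[X_k] = \sum_{e_j\in\Phi_{e}^R} r_{j,k}^{-\alpha}$. Next I would compute $\mathbb{E}[X_k^2]$ by expanding the square into diagonal and off-diagonal terms: the diagonal terms contribute $\sum_j \mathbb{E}[|h_{j,k}|^4] r_{j,k}^{-2\alpha} = 2\sum_j r_{j,k}^{-2\alpha}$ (using $\mathbb{E}[|h|^4]=2$ for a unit-mean exponential), and the off-diagonal terms, by independence of distinct fading coefficients, contribute $\sum_{i\ne j} r_{i,k}^{-\alpha}r_{j,k}^{-\alpha}$. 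Adding these gives $\mathbb{E}[X_k^2] = \big(\sum_j r_{j,k}^{-\alpha}\big)^2 + \sum_j r_{j,k}^{-2\alpha}$, so the variance is exactly $\mathrm{Var}[X_k] = \sum_j r_{j,k}^{-2\alpha}$. Substituting mean and variance into the inversion formulas immediately produces \eqref{upsilon_k}.

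There is essentially no deep obstacle here; the statement is a routine moment-matching calculation and the only thing to get right is the fourth moment $\mathbb{E}[|h_{j,k}|^4]=2$ for the squared magnitude of a standard complex Gaussian (equivalently, the second moment of a unit-mean exponential), together with carefully separating the diagonal from the off-diagonal terms when squaring the sum. The mild subtlety worth flagging — and the reason the lemma conditions on fixed eavesdropper positions inside $\mathcal{B}(o,R)$ — is that the distances $r_{j,k}$ to a common eavesdropper $e_j$ from different SBSs are correlated, so this conditioning is what makes the terms of the sum independent and legitimizes moving expectations inside; but within the conditioned model the calculation is elementary. Since the excerpt already relegates the detailed computation to Appendix~\ref{appendix_lemma_2}, the proof body need only set up the moment-matching identities and quote the two moments above, then invert.
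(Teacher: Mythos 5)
Your proposal is correct and follows essentially the same route as the paper: both match the first two moments of $X_k$ to those of a gamma random variable, obtaining $\mu_{X_k}=\sum_{e_j}r_{j,k}^{-\alpha}$ and $\sigma_{X_k}^2=\sum_{e_j}r_{j,k}^{-2\alpha}$ and then inverting $\upsilon_k=\mu^2/\sigma^2$, $\tau_k=\sigma^2/\mu$. The only cosmetic difference is that the paper extracts the moments as derivatives of the moment generating function $\prod_j\mathbb{E}\bigl[e^{\omega|h_{j,k}|^2r_{j,k}^{-\alpha}}\bigr]$ at $\omega=0$, whereas you expand $\mathbb{E}[X_k^2]$ directly into diagonal and cross terms using $\mathbb{E}[|h|^4]=2$; the computations are equivalent.
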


For a PPP, the probability of having $J$ eavesdroppers inside the disc $\mathcal{B}_{o,R}$ is given by \cite{Chiu2013Stochastic}
\begin{equation}\label{ppp_n_j}
\mathcal{O}_J\triangleq\mathbb{P}\{n=J\} = e^{-\pi\lambda_e R^2}\frac{\left(\pi \lambda_e R^2\right)^J}{J!}.
\end{equation}
Hence, the secrecy probability in \eqref{ps_ot_ce} can be rewritten as 
\begin{equation}\label{ps_equal}
p_{s,ce}^{\rm OT} = 1-\lim_{R\rightarrow\infty}\sum_{J=1}^{\infty}\mathcal{O}_J\mathbb{E}_{\Phi_{e}^R}\left[\prod_{k=1}^K\mathbb{P}\left\{X_k>{\frac{\beta_{e,k}}{K\rho}}\right\}\bigg|{\Phi_{e}^R},J\right].
\end{equation}
The exact $p_{s,ce}^{\rm OT}$ is provided by the following theorem.
\begin{theorem}\label{theorem_2}
	The secrecy probability in the OT scheme for the CE case is given by 
	\begin{align}\label{ps_oft}
	p_{s,ce}^{\rm OT}  = 1-&\lim_{R\rightarrow\infty}\sum_{J=1}^{\infty}\frac{\lambda_e^Je^{-\pi\lambda_e R^2}}{J!}\times\nonumber\\
	&\left({\int_o^R\int_0^{2\pi}}\prod_{k=1}^K \frac{\Gamma\left(\upsilon_k,\frac{\beta_{e,k}}{K\rho\tau_k}\right)}{\Gamma\left(\upsilon_k\right)}{r}drd\theta\right)^J
	\end{align}
	where $\upsilon_k$ and $\tau_k$ are given in \eqref{upsilon_k}.
\end{theorem}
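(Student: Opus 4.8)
The plan is to evaluate the truncated-disc form \eqref{ps_equal} by conditioning on the number of eavesdroppers inside $\mathcal{B}(o,R)$ and to let $R\to\infty$ only at the end. First I would use the basic property of a homogeneous PPP \cite{Chiu2013Stochastic}: conditioned on the event $\{n=J\}$, which has probability $\mathcal{O}_J$ from \eqref{ppp_n_j}, the $J$ eavesdropper locations in $\mathcal{B}(o,R)$ are independent and uniformly distributed over the disc. This turns the conditional expectation $\mathbb{E}_{\Phi_e^R}[\,\cdot\mid J]$ in \eqref{ps_equal} into a $J$-fold integral of the integrand against the normalized measure $r\,dr\,d\theta/(\pi R^2)$, so the prefactor becomes $\mathcal{O}_J/(\pi R^2)^J=\lambda_e^J e^{-\pi\lambda_e R^2}/J!$, which is exactly the weight appearing in \eqref{ps_oft}.

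Second, I would treat the integrand for a \emph{fixed} eavesdropper configuration. The $K$ aggregate powers $X_k=\sum_{e_j\in\Phi_e^R}|h_{j,k}|^2 r_{j,k}^{-\alpha}$, $k\in\mathcal{K}$, involve pairwise disjoint families of fading coefficients and are therefore conditionally independent given the locations; hence the probability that every subfile is intercepted factorizes as $\prod_{k=1}^K\mathbb{P}\{X_k>\beta_{e,k}/(K\rho)\}$, the bracket in \eqref{ps_equal}. By Lemma \ref{lemma_2} each $X_k$ is approximated by a gamma random variable with shape $\upsilon_k$ and scale $\tau_k$, and the complementary CDF of such a variable at $\beta_{e,k}/(K\rho)$ is $\Gamma(\upsilon_k,\beta_{e,k}/(K\rho\tau_k))/\Gamma(\upsilon_k)$; substituting this and inserting the result into the $J$-fold positional integral produces the summand of \eqref{ps_oft}. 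It then remains to interchange the sum over $J$ with the expectation — which is legitimate by Tonelli's theorem since every term is nonnegative — and to take $R\to\infty$; the truncated quantity $\sum_{J\ge1}\mathcal{O}_J\,\mathbb{E}_{\Phi_e^R}[\,\cdot\mid J]$ is nondecreasing in $R$ (enlarging the disc only adds interfering eavesdroppers), so the monotone limit exists and the outer $1-\lim_{R\to\infty}$ is well defined.

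The step I expect to be the main obstacle is organizing the $J$-fold positional average into the $J$-th power of a single spatial integral, as written in \eqref{ps_oft}. This is delicate because the moment-matched parameters $\upsilon_k,\tau_k$ in \eqref{upsilon_k} are functionals of the whole eavesdropper set rather than of one location, so the collapse to a per-eavesdropper product — and thence to one planar integral raised to the power $J$ — is precisely where the gamma approximation and the i.i.d.-uniform structure of $\Phi_e^R$ have to be combined; I would verify this reduction carefully and, if only an approximate identity survives, state the theorem with the same kind of $\approx$ caveat that accompanies Theorem \ref{theorem_1}. Everything else — the conditional independence across $k$, the incomplete-gamma evaluation, and the Poisson bookkeeping — is routine.
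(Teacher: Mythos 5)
Your proposal follows the paper's proof essentially verbatim: the paper likewise conditions on the Poisson count $J$ with weight $\mathcal{O}_J$ from \eqref{ppp_n_j}, invokes the i.i.d.\ BPP density $\prod_{j}r_j/(\pi R^2)$ in \eqref{bpp}, evaluates the inner probability through the gamma approximation of Lemma \ref{lemma_2} as $\Gamma\left(\upsilon_k,\beta_{e,k}/(K\rho\tau_k)\right)/\Gamma(\upsilon_k)$, and then substitutes everything into \eqref{ps_equal}. The obstacle you flag is real and is not addressed by the paper's one-line ``substituting completes the proof'': because $\upsilon_k$ and $\tau_k$ in \eqref{upsilon_k} are functionals of the whole configuration $\Phi_{e}^{R}$, the conditional expectation is a genuine $J$-fold integral that does not factor across eavesdroppers, so the $J$-th power of a single planar integral appearing in \eqref{ps_oft} must be read either as shorthand for that $J$-fold integral or under a per-eavesdropper reading of $\upsilon_k,\tau_k$; your instinct to verify this collapse explicitly, and to attach the same $\lessapprox$-type caveat used in Theorem \ref{theorem_1} if only an approximate identity survives, is well founded.
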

\begin{proof}
	Recalling the PDF of $X_k$ in \eqref{pdf_gamma}, the inner probability in \eqref{ps_equal} can be given by 
	\begin{equation}\label{inner_pro}
	\mathbb{P}\left\{X_k>\frac{\beta_{e,k}}{K\rho}\right\}=\frac{1}{\Gamma\left(\upsilon_k\right)}\Gamma\left(\upsilon_k,\frac{\beta_{e,k}}{K\rho\tau_k}\right).
	\end{equation}
	Conditioned on having $J$ eavesdroppers in the disc $\mathcal{B}_{o,R}$, the distribution of the eavesdroppers' locations follows a binomial point process (BPP). Using the i.i.d. property of a BPP, the joint PDF of the distances ${\bm r_k}=[r_{1,k},\cdots,r_{J,k}]^{\rm T}$ and angles ${\bm \theta_k}=[\theta_{1,k},\cdots,\theta_{J,k}]^{\rm T}$ is given by
	\begin{equation}\label{bpp}
	f_{{\bm r_k, \bm \theta_k}}(r_1,\cdots,r_J,\theta_{1},\cdots,\theta_{J}) = \prod_{j=1}^J\frac{r_{j}}{\pi R^2}.
	\end{equation}
	Substituting \eqref{ppp_n_j}, \eqref{inner_pro}, and \eqref{bpp} into \eqref{ps_equal} completes the proof.
\end{proof}

Although Theorem \ref{theorem_2} does not give a closed-form expression for the secrecy probability, it yields a general and exact result without requiring time-consuming Monte Carlo simulations. Furthermore, it provides a benchmark for comparison with other approximate results.  

If the distance between any two adjacent SBSs is large enough, the correlation of the distances between an eavesdropper and any two SBSs can be ignored due to the random mobility of eavesdroppers. In other words, the positions of eavesdroppers seen from different SBSs can be regarded as independent PPPs $\Phi_{e,k}$ with the same density $\lambda_e$. In this case, the secrecy probability in \eqref{ps_ot_ce} can be recast as 
\begin{equation}\label{ps_ott_ce}
p_{s,ce}^{\rm OT} = 1 - \prod_{k=1}^K
\mathbb{P}\left\{\sum_{e_j\in\Phi_{e,k}}{K\rho}|h_{j,k}|^2\tilde r_{j,k}^{-\alpha}>{\beta_{e,k}}\right\},
\end{equation}
where $\tilde r_{j,k}$ denotes the distance between the $j$-th eavesdropper and the $k$-th SBS after shifting the coordinate system to place the $k$-th SBS at the origin. Note that $\tilde r_{j,k}$ is distinguished from $r_{j,k}$ given in \eqref{ps_ot_ce}.
Since the expectation over $\Phi_{e,k}$ is moved inside the product, the computation can be greatly simplified. Let $I_k ={K\rho} \sum_{e_j\in\Phi_{e,k}}|h_{j,k}|^2\tilde r_{j,k}^{-\alpha}$. We first give the Laplace transform of $I_k$ in the following lemma.
\begin{lemma}[{\cite[Eqn. (8)]{Haenggi2009Stochastic}}]\label{lemma_3}
	Let $\kappa\triangleq \pi\Gamma(1+2/\alpha)\Gamma(1-2/\alpha)$. The Laplace transform of $I_k$ is
	\begin{equation}\label{laplace_lemma_3}
	\mathcal{L}_{I_k}(s)  =\exp\left(-\kappa\lambda_e ({K\rho}s)^{2/\alpha}\right).
	\end{equation}
\end{lemma}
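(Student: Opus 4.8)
The plan is to follow the standard route for the Laplace functional of Poisson shot noise, exactly as in \cite{Haenggi2009Stochastic}, since $I_k$ is precisely the aggregate interference at the origin from a homogeneous PPP with Rayleigh fading marks. First I would write $\mathcal{L}_{I_k}(s)=\mathbb{E}\!\left[e^{-sI_k}\right]$ and exploit that the marks $\{h_{j,k}\}$ are i.i.d.\ and independent of $\Phi_{e,k}$, so the fading expectation can be moved inside the product over points, giving
\[
\mathcal{L}_{I_k}(s)=\mathbb{E}_{\Phi_{e,k}}\!\left[\prod_{e_j\in\Phi_{e,k}}\mathbb{E}_{h_{j,k}}\!\left[e^{-sK\rho|h_{j,k}|^2\tilde r_{j,k}^{-\alpha}}\right]\right].
\]
Because $|h_{j,k}|^2$ is a unit-mean exponential RV, the inner expectation evaluates to $\bigl(1+sK\rho\,\tilde r_{j,k}^{-\alpha}\bigr)^{-1}$, in the same spirit as step $\rm(e)$ in the proof of Lemma~\ref{lemma_1}.

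Second, I would apply the PGFL of a homogeneous PPP of density $\lambda_e$ on $\mathbb{R}^2$ and pass to polar coordinates (the integrand is rotationally invariant, so the angular integral yields a factor $2\pi$):
\[
\mathcal{L}_{I_k}(s)=\exp\!\left(-2\pi\lambda_e\int_0^{\infty}\frac{sK\rho}{r^{\alpha}+sK\rho}\,r\,dr\right).
\]
The radial integral is then put in closed form by the substitution $r=(sK\rho)^{1/\alpha}y$, which factors out the dependence on $s$ and reduces it to $(sK\rho)^{2/\alpha}\int_0^{\infty}\frac{y}{1+y^{\alpha}}\,dy$; the last integral equals $\dfrac{\pi}{\alpha\sin(2\pi/\alpha)}$ by the standard identity $\int_0^{\infty}\frac{x^{\mu-1}}{1+x^{\alpha}}\,dx=\frac{\pi}{\alpha\sin(\pi\mu/\alpha)}$ with $\mu=2$. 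Collecting the constants gives $\mathcal{L}_{I_k}(s)=\exp\!\bigl(-\tfrac{2\pi^2}{\alpha\sin(2\pi/\alpha)}\lambda_e(sK\rho)^{2/\alpha}\bigr)$.

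Finally, I would identify this prefactor with $\kappa$: using Euler's reflection formula $\Gamma(z)\Gamma(1-z)=\pi/\sin(\pi z)$ together with $\Gamma(1+2/\alpha)=(2/\alpha)\Gamma(2/\alpha)$ yields $\pi\,\Gamma(1+2/\alpha)\Gamma(1-2/\alpha)=\tfrac{2\pi^2}{\alpha\sin(2\pi/\alpha)}$, which matches the exponent and establishes \eqref{laplace_lemma_3}. The only point requiring a word of care is the convergence of the radial integral near $r=0$ and at infinity, which holds precisely because $\alpha>2$ (the physically meaningful regime for a planar path-loss model); apart from that I do not anticipate any genuine obstacle, as the statement is a direct specialization of a well-known stochastic-geometry result and is cited as such.
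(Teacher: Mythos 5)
Your derivation is correct and follows exactly the standard shot-noise/PGFL route that the paper does not reprove but simply imports by citation from \cite{Haenggi2009Stochastic}: exponential fading marks give the product of $\bigl(1+sK\rho\,\tilde r^{-\alpha}\bigr)^{-1}$ terms, the PGFL and the substitution $r=(sK\rho)^{1/\alpha}y$ yield the $(sK\rho)^{2/\alpha}$ scaling, and your identification $\pi\,\Gamma(1+2/\alpha)\Gamma(1-2/\alpha)=\tfrac{2\pi^2}{\alpha\sin(2\pi/\alpha)}$ via the reflection formula matches the stated $\kappa$. The convergence caveat $\alpha>2$ is also the right one, so there is nothing to add.
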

Similar to Theorem \ref{theorem_1}, a closed-form expression for an upper bound of $p_{s,ce}^{\rm OT}$ is given below.
\begin{theorem}\label{theorem_3}
	The secrecy probability in the OT scheme for the CE case satisfies
	\begin{equation}\label{ps_ott}
	p_{s,ce}^{\rm OT} 
	\lessapprox 1 -\prod_{k=1}^K \sum_{m=0}^M {M\choose m}(-1)^{m}\mathcal{L} _{I_k}\left(\frac{m\xi}{\beta_{e,k}}\right),
	\end{equation}
	where $\xi$ and $M$ have been stated in Theorem \ref{theorem_1}.
\end{theorem}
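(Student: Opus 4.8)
\textbf{Proof proposal for Theorem \ref{theorem_3}.}

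The plan is to mimic the argument of Theorem \ref{theorem_1}, applied separately to each of the $K$ per-SBS interference terms $I_k$, and then assemble the pieces through the product form \eqref{ps_ott_ce}. Starting from \eqref{ps_ott_ce}, the secrecy probability is $1-\prod_{k=1}^K\mathbb{P}\{I_k>\beta_{e,k}\}$, so it suffices to produce a lower bound on each $\mathbb{P}\{I_k>\beta_{e,k}\}$ — equivalently, an upper bound on the CDF $\mathbb{P}\{I_k\le\beta_{e,k}\}$ — and the product of these lower bounds, subtracted from one, will give the claimed upper bound on $p_{s,ce}^{\rm OT}$. One subtlety worth flagging is monotonicity: bounding each factor $\mathbb{P}\{I_k>\beta_{e,k}\}$ from below makes the product $\prod_k \mathbb{P}\{I_k>\beta_{e,k}\}$ smaller, hence $1-\prod_k(\cdot)$ larger, which is consistent with the ``$\lessapprox$'' in \eqref{ps_ott}; this is why the approximation goes in the stated direction.

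For a single $k$, I would write $\mathbb{P}\{I_k\le\beta_{e,k}\}=\mathbb{P}\{I_k/\beta_{e,k}\le 1\}$ and, exactly as in step $\rm (f)$ of the proof of Theorem \ref{theorem_1}, replace the deterministic threshold $1$ by a normalized gamma RV $\iota$ with PDF $f_\iota(x)=x^{M-1}e^{-x}/\Gamma(M)$, which converges to unity as $M\to\infty$ \cite{Singh2015Tractable}. Then Alzer's inequality \cite{Alzer1997On} gives $\mathbb{P}\{\iota\ge z\}\le 1-[1-e^{-\xi z}]^M$ with $\xi=M(M!)^{-1/M}$, so that $\mathbb{P}\{I_k/\beta_{e,k}\le\iota\}\lessapprox 1-\mathbb{E}_{I_k}\big[(1-e^{-\xi I_k/\beta_{e,k}})^M\big]$. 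Expanding $(1-e^{-\xi I_k/\beta_{e,k}})^M$ by the binomial theorem and interchanging expectation and finite sum turns the right-hand side into $1-\sum_{m=0}^M\binom{M}{m}(-1)^m\mathbb{E}_{I_k}[e^{-m\xi I_k/\beta_{e,k}}]=1-\sum_{m=0}^M\binom{M}{m}(-1)^m\mathcal{L}_{I_k}(m\xi/\beta_{e,k})$, where the last equality is just the definition of the Laplace transform (the $m=0$ term contributes $\mathcal{L}_{I_k}(0)=1$, which is why the sum here runs from $0$ rather than from $1$ as in Theorem \ref{theorem_1}). Hence $\mathbb{P}\{I_k>\beta_{e,k}\}=1-\mathbb{P}\{I_k\le\beta_{e,k}\}\gtrapprox\sum_{m=0}^M\binom{M}{m}(-1)^m\mathcal{L}_{I_k}(m\xi/\beta_{e,k})$.

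Finally I would substitute these per-SBS bounds into $p_{s,ce}^{\rm OT}=1-\prod_{k=1}^K\mathbb{P}\{I_k>\beta_{e,k}\}$ to obtain \eqref{ps_ott}, optionally inserting the closed form $\mathcal{L}_{I_k}(s)=\exp(-\kappa\lambda_e(K\rho s)^{2/\alpha})$ from Lemma \ref{lemma_3} to make the expression fully explicit in terms of $\lambda_e$, $\rho$, $\alpha$ and the $\beta_{e,k}$. I do not expect a genuine obstacle here, since the heavy lifting — the validity of the gamma-RV surrogate and of Alzer's inequality, and the availability of a clean Laplace transform for a PPP interference field — has already been established; the only points requiring a little care are the direction of the inequality when multiplying the $K$ lower bounds together, and the bookkeeping of the summation index ($m=0$ versus $m=1$) relative to Theorem \ref{theorem_1}.
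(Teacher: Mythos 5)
Your proof is correct and follows exactly the route the paper intends: the paper omits the argument for this theorem, stating only that it is ``similar to Theorem \ref{theorem_1}'', and your per-$k$ application of the gamma-surrogate/Alzer/binomial-expansion steps, followed by substitution of the resulting lower bounds on $\mathbb{P}\{I_k>\beta_{e,k}\}$ into \eqref{ps_ott_ce}, is precisely that argument. The two points you flag — the direction of the inequality when multiplying the $K$ (nonnegative) lower bounds, and the $m=0$ versus $m=1$ indexing relative to Theorem \ref{theorem_1} — are both handled correctly.
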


For the special case with $\alpha=4$, we can further derive a closed-form expression for the exact $p_{s,ce}^{\rm OT} $. The PDF of $I_k$ can be obtained from the inverse Laplace transform of $\mathcal{L}_{I_k}(s) $, i.e.,
\begin{equation}\label{pdf_laplace}
f^{\alpha=4}_{I_k}(x) =\mathcal{L}^{-1}_{I_k}(s) = \frac{\kappa\lambda_e\sqrt{K\rho}}{2\sqrt{\pi}x^{3/2}}\exp\left(-\frac{\kappa^2\lambda_e^2K\rho}{4x}\right).
\end{equation}
Plugging \eqref{pdf_laplace} into \eqref{ps_ott_ce} yields 
\begin{align}\label{ps_ce_exact}
p_{s,ce}^{{\rm OT},\alpha=4} &=1 - \prod_{k=1}^K\int_{\beta_{e,k}}^{\infty}f^{\alpha=4}_{I_k}(x)dx \nonumber\\
&\stackrel{\mathrm{(h)}}= 1-\prod_{k=1}^K{\rm erf}\left(\frac{\kappa\lambda_e}{2}\sqrt{\frac{K\rho}{\beta_{e,k}}}\right),
\end{align}
where step $\mathrm{(h)}$ follows from the substitution ${\kappa^2\lambda_e^2K\rho}/({4x})\rightarrow t^2$. Since ${\rm erf}(z)<1$ increases with $z$, it is apparent that $p_{s,ce}^{\rm OT} $ increases with the number of SBSs $K$ and the redundant rate $R_e$, whereas decreases with the density of eavesdroppers $\lambda_e$ and the normalized SNR $\rho$.

\section{Secure Content Delivery Probability Maximization}

In this section, we jointly design the optimal redundant rate $R_e$ and the optimal caching assignment proportion $\phi$ to maximize the overall SCDP $\mathcal{P}_{scd}$. From the definition of $\mathcal{P}_{scd}$ given in \eqref{def_overall_psct}, we
observe that the problem of maximizing $\mathcal{P}_{scd}$ can be decomposed into two steps: 1) designing the optimal $R_e$ to maximize the SCDP $\mathcal{P}^{\rm S}_{scd} = p^{\rm S}_{c}p^{\rm S}_{s}$ for for each scheme $\rm S\in\{ JT,OT,CM\}$; 2) designing the optimal $\phi$ to
maximize the overall SCDP $\mathcal{P}_{scd}$. In what follows, we perform the optimization procedure step by step.
\subsection{Optimization of Redundant Rate $R_e$}
This subsection determines the optimal redundant rate $R_e$ for a target secrecy rate $R_s$ to maximize the SCDP for scheme $\rm S\in\{\rm JT, OT, CM\}$. For tractability, we focus on the NCE case. The optimization in the CE case can be operated similarly, which however would result in a considerable calculation complexity and a much more sophisticated analysis but provide no significant qualitative difference. 

\subsubsection{JT Scheme}
{The SCDP in this case is defined as the product of the connection probability $p_c^{\rm JT}$ in \eqref{pc_jt} and the secrecy probability in \eqref{ps_jt_nce}, i.e., $\mathcal{P}_{scd}^{\rm JT} =p_c^{\rm JT} p_{s}^{\rm JT}$, which can be written as 
	\begin{equation}\label{psct_jt}
	\mathcal{P}_{scd}^{\rm JT} = \exp\left(-A\beta_t-2\lambda_e\int_0^\infty\int_0^{\pi}e^{-B(r,\theta)\beta_e}rdrd\theta\right),
	\end{equation}
	where $A \triangleq {1}/\left({\rho\sum_{k=1}^Kr_{b,k}^{-\alpha} }\right)$ and $B(r,\theta) ={1}/\left({\rho\sum_{k=1}^Kr_{k}^{-\alpha}}\right) $. 
}
Since we have $R_t = R_s + R_e\Rightarrow \beta_t = \beta_s + (1+\beta_s)\beta_e$, substituting $\beta_t$ into \eqref{psct_jt} yields $\mathcal{P}_{scd}^{\rm JT} = e^{-A\beta_s}e^{-Q(\beta_e)}$ such that maximizing $\mathcal{P}_{scd}^{\rm JT} $ is equivalent to minimizing the auxiliary function $Q(\beta_e)$ given below,
\begin{equation}\label{q_jt}
Q(\beta_e) = A(1+\beta_s)\beta_e+2\lambda_e\int_0^\infty\int_0^{\pi}e^{-B(r,\theta)\beta_e}rdrd\theta.
\end{equation}
Hence, we focus on the following problem, and the solution is given in Theorem \ref{theorem_opt_beta_jt},
\begin{equation}\label{problem_beta_jt}
\min_{\beta_e} Q(\beta_e),~~~{\rm s.t.}~~ \beta_e>0.
\end{equation}
\begin{theorem}\label{theorem_opt_beta_jt}
	$Q(\beta_e)$ is convex on $\beta_e$, and the solution $\beta^{\star}_e$ to problem \eqref{problem_beta_jt} is characterized by 
	\begin{equation}\label{opt_beta_jt}
	\frac{dQ(\beta^{\star}_e)}{d\beta^{\star}_e}=0,
	\end{equation}
	i.e., it is the unique zero-crossing of the derivative ${dQ(\beta_e)}/{d\beta_e}$ given below
	\begin{equation}\label{dq1}
	\frac{dQ(\beta_e)}{d\beta_e}=	A(1+\beta_s)-2\lambda_e\int_0^\infty\int_0^{\pi}B(r,\theta)e^{-B(r,\theta)\beta_e}rdrd\theta.
	\end{equation} 
\end{theorem}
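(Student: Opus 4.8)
The plan is to establish convexity of $Q(\beta_e)$ on $(0,\infty)$ and then argue that the stationarity condition \eqref{opt_beta_jt} both is necessary for optimality and has a unique solution. First I would compute the first and second derivatives of $Q$. Differentiating \eqref{q_jt} under the integral sign (justified since the integrand and its $\beta_e$-derivatives are dominated by integrable functions of $(r,\theta)$ for $\beta_e$ in any compact subinterval of $(0,\infty)$, given that $B(r,\theta)>0$ and $B(r,\theta)\to 0$ only polynomially as $r\to\infty$ with $\alpha>2$, ensuring the $r\,dr$ integral converges) gives \eqref{dq1}, and a second differentiation yields
\begin{equation*}
\frac{d^2Q(\beta_e)}{d\beta_e^2}=2\lambda_e\int_0^\infty\int_0^{\pi}B(r,\theta)^2 e^{-B(r,\theta)\beta_e}\,r\,dr\,d\theta.
\end{equation*}
Since $B(r,\theta)\ge 0$, the integrand is nonnegative and not identically zero, so $d^2Q/d\beta_e^2>0$ for all $\beta_e>0$; hence $Q$ is strictly convex on $(0,\infty)$.

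Next I would analyze the behavior of $dQ/d\beta_e$ at the endpoints to pin down the minimizer. At $\beta_e\to 0^+$, the integral term in \eqref{dq1} tends to $2\lambda_e\int_0^\infty\int_0^\pi B(r,\theta)\,r\,dr\,d\theta$; I would note this is finite provided $\alpha>4$ (so that $B\sim r^{-\alpha}$ times... — more precisely $B(r,\theta)=1/(\rho\sum_k r_k^{-\alpha})$ grows like $r^{\alpha}$ for large $r$, so actually this endpoint integral diverges), which forces $dQ/d\beta_e\to-\infty$ as $\beta_e\to 0^+$; meanwhile as $\beta_e\to\infty$ the dominated convergence theorem gives that the integral term vanishes, so $dQ/d\beta_e\to A(1+\beta_s)>0$. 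By strict convexity, $dQ/d\beta_e$ is strictly increasing, so it has exactly one zero-crossing $\beta_e^\star\in(0,\infty)$, and this is the unique global minimizer of $Q$ on $(0,\infty)$. This is precisely the characterization \eqref{opt_beta_jt}. Finally, since maximizing $\mathcal{P}_{scd}^{\rm JT}=e^{-A\beta_s}e^{-Q(\beta_e)}$ over $\beta_e>0$ is equivalent to minimizing $Q$, the optimal redundant rate is $R_e^\star=\log_2(1+\beta_e^\star)$.

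The main obstacle I anticipate is the careful treatment of the large-$r$ tail in the integral defining $dQ/d\beta_e$ and its limit as $\beta_e\to 0^+$: because $B(r,\theta)$ is itself unbounded as $r\to\infty$, one must verify that the exponential factor $e^{-B(r,\theta)\beta_e}$ controls the growth of $B(r,\theta)$ (resp. $B(r,\theta)^2$) for every fixed $\beta_e>0$, so that \eqref{dq1} and the second-derivative integral are finite and differentiation under the integral sign is legitimate; and then to show the limiting derivative at $0^+$ is strictly negative (possibly $-\infty$), which together with the positive limit at $\infty$ and strict monotonicity delivers existence and uniqueness of the zero-crossing. The convexity argument itself is routine once differentiation under the integral is justified.
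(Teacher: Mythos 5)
Your proof is correct and follows essentially the same route as the paper: compute the second derivative to establish strict convexity, check that $dQ(\beta_e)/d\beta_e$ tends to $A(1+\beta_s)>0$ as $\beta_e\to\infty$ and is negative as $\beta_e\to 0^+$, and conclude that the unique zero-crossing is the minimizer. The only (harmless) difference is that you obtain the negativity at $0^+$ directly from the divergence of $\int_0^\infty\int_0^\pi B(r,\theta)\,r\,dr\,d\theta$, whereas the paper argues indirectly by noting $\mathcal{P}_{scd}^{\rm JT}|_{\beta_e=0}=0$ while $\mathcal{P}_{scd}^{\rm JT}|_{\beta_e>0}>0$.
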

\begin{proof}
	Please refer to Appendix \ref{appendix_opt_beta_jt}.
\end{proof} 

Appendix \ref{appendix_opt_beta_jt} shows that ${dQ(\beta_e)}/{d\beta_e}$ increases from negative to positive as $\beta_e$ increases from zero to infinity. 
Then, the value of $\beta_e^{\star}$ can be efficiently obtained via a bisection search with equation \eqref{opt_beta_jt}. The following corollary develops some insights into the behavior of $\beta^{\star}_e$. 
\begin{corollary}\label{corollary_beta_jt} 
	The optimal $\beta^{\star}_e$ that maximizes $\mathcal{P}_{scd}^{\rm JT}$  increases with the eavesdropper density $\lambda_e$, and decreases with the secrecy rate $R_s$ and the SBS-user distance $r_{b,k}$ for $k\in\mathcal{K}$. 
\end{corollary}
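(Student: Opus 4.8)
The plan is to carry out a monotone comparative statics argument on the first-order optimality condition $dQ(\beta_e^{\star})/d\beta_e=0$ from Theorem~\ref{theorem_opt_beta_jt}. Write $g(\beta_e)\triangleq dQ(\beta_e)/d\beta_e$, which by \eqref{dq1} reads
\begin{equation*}
g(\beta_e)=A(1+\beta_s)-2\lambda_e\int_0^{\infty}\int_0^{\pi}B(r,\theta)e^{-B(r,\theta)\beta_e}\,r\,dr\,d\theta,\qquad \beta_s=2^{R_s}-1 .
\end{equation*}
Appendix~\ref{appendix_opt_beta_jt} already shows that $g$ is strictly increasing on $\beta_e>0$, so $\beta_e^{\star}$ is its unique root and $\partial g/\partial\beta_e>0$ there. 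Differentiating $g(\beta_e^{\star}(\nu);\nu)=0$ implicitly with respect to a generic parameter $\nu$ then gives $d\beta_e^{\star}/d\nu=-(\partial g/\partial\nu)/(\partial g/\partial\beta_e)$ evaluated at $\beta_e^{\star}$, whose sign is opposite to that of $\partial g/\partial\nu$; equivalently, a pointwise increase (resp.\ decrease) of $g$ in $\nu$ pushes the root $\beta_e^{\star}$ to the left (resp.\ right). It therefore suffices to determine, for each $\nu\in\{\lambda_e,R_s,r_{b,k}\}$, the sign of the $\nu$-dependence of $g$ at fixed $\beta_e$.

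Two of the three cases are immediate. Only the integral term of $g$ depends on $\lambda_e$, and it does so linearly, so $\partial g/\partial\lambda_e=-2\int_0^{\infty}\int_0^{\pi}B(r,\theta)e^{-B(r,\theta)\beta_e}r\,dr\,d\theta<0$, giving $d\beta_e^{\star}/d\lambda_e>0$. For $R_s$, since $\beta_s=2^{R_s}-1$ is strictly increasing in $R_s$ it is enough to differentiate in $\beta_s$; because neither $A=1/(\rho\sum_k r_{b,k}^{-\alpha})>0$ nor the integral term involves $\beta_s$, we obtain $\partial g/\partial\beta_s=A>0$, hence $d\beta_e^{\star}/dR_s<0$.

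The remaining case $\nu=r_{b,k}$ is the crux, because $r_{b,k}$ enters $g$ both through $A$ and through $B(r,\theta)$ (recall $r_k=\sqrt{r_{b,k}^2+r^2-2r_{b,k}r\cos(\theta_{b,k}-\theta)}$ is the distance from the integration point to the $k$-th SBS). The $A$-part is harmless: $r_{b,k}^{-\alpha}$ decreases in $r_{b,k}$, so $\sum_j r_{b,j}^{-\alpha}$ decreases and $A$ increases, contributing $(1+\beta_s)\,\partial A/\partial r_{b,k}>0$ to $\partial g/\partial r_{b,k}$. The obstacle is that $B(r,\theta)$ is \emph{not} monotone in $r_{b,k}$ pointwise ($\partial r_k/\partial r_{b,k}=(r_{b,k}-r\cos(\theta_{b,k}-\theta))/r_k$ changes sign), so the integral term cannot be signed term by term. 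My plan here is to translate the integration coordinates so the origin sits at the $k$-th SBS, turning $r_k$ into the new radial variable while leaving the area element $r\,dr\,d\theta$ unchanged: for $K=1$ the integrand then depends only on the new radius, the integral is manifestly independent of $r_{b,1}$, and $\partial g/\partial r_{b,1}=(1+\beta_s)\,\partial A/\partial r_{b,1}>0$ follows, so $\beta_e^{\star}$ decreases with $r_{b,1}$. For $K>1$ the translated integral still depends on $r_{b,k}$ through the relative positions of the other SBSs, and I expect \emph{this} to be the main technical hurdle; I would close it by bounding the variation of the integral term against the strictly positive $\partial A/\partial r_{b,k}$ (or, in the regime of well-separated SBSs where the cross-terms are negligible, reducing to the $K=1$ calculation), concluding $\partial g/\partial r_{b,k}\ge 0$ and hence $d\beta_e^{\star}/dr_{b,k}\le 0$.
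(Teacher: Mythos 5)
Your method is the paper's method: treat the first-order condition $Q_1(\beta_e^{\star})\triangleq dQ(\beta_e^{\star})/d\beta_e=0$ as an implicit equation, use the strict convexity of $Q$ (so $\partial Q_1/\partial\beta_e>0$ at the root) to sign the denominator of $d\beta_e^{\star}/d\nu=-(\partial Q_1/\partial\nu)/(\partial Q_1/\partial\beta_e)$, and then sign $\partial Q_1/\partial\nu$ for each parameter. Your $\lambda_e$ case reproduces the paper's explicit computation, and your $R_s$ case (via $\beta_s=2^{R_s}-1$, with $\partial Q_1/\partial\beta_s=A>0$) is the straightforward "similar" case the paper alludes to. Both are correct and complete.

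The issue is the $r_{b,k}$ case, which you have diagnosed correctly but not actually closed. As you observe, $r_{b,k}$ enters $Q_1$ both through $A$ (where the effect is cleanly signed, $\partial A/\partial r_{b,k}>0$) and through $B(r,\theta)$ inside the integral, and neither $\partial r_k/\partial r_{b,k}=(r_{b,k}-r\cos(\theta_{b,k}-\theta))/r_k$ nor the factor $(1-B\beta_e)$ arising from differentiating $Be^{-B\beta_e}$ has a fixed sign over the integration domain, so $\partial Q_1/\partial r_{b,k}$ cannot be signed pointwise. Your translation argument settles $K=1$, but for $K>1$ you only state a plan ("bound the variation of the integral term" or retreat to well-separated SBSs), and the latter introduces a hypothesis not present in the corollary. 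So, as written, the monotonicity in $r_{b,k}$ for general $K$ is not established. It is worth noting that the paper's own proof does not establish it either: it computes only the $\lambda_e$ case and asserts that "the other conclusions can be obtained in a similar way," which is accurate for $R_s$ but glosses over exactly the difficulty you identified for $r_{b,k}$. In that sense your proposal is more honest than the published proof, but it is still incomplete on the same point; to finish it you would need an argument that the $r_{b,k}$-derivative of $\int_0^{\infty}\int_0^{\pi}B e^{-B\beta_e}\,r\,dr\,d\theta$ is dominated by (or has the same favorable sign as) the contribution $(1+\beta_s)\,\partial A/\partial r_{b,k}$, uniformly in the SBS geometry.
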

 \begin{proof}
 	Let us take $\lambda_e$ as an example. Denote ${dQ(\beta_e)}/{d\beta_e}$ in \eqref{dq1} as $Q_1(\beta_e)$ such that $Q_1(\beta_e^{\star})=0$.
 	Using the derivative rule for implicit functions with $Q_1(\beta_e^{\star})=0$ yields,
 	\begin{align}
 	\frac{dQ_1(\beta_e^{\star})}{d\lambda_e}&=-\frac{\partial Q_1(\beta_e^{\star})/\partial\lambda_e}{\partial Q_1(\beta_e^{\star})/\partial\beta_e} \nonumber\\
 	&= \frac{\int_0^\infty\int_0^{\pi}B(r,\theta)e^{-B(r,\theta)\beta_e}rdrd\theta}{\lambda_e\int_0^\infty\int_0^{\pi}B^2(r,\theta)e^{-B(r,\theta)\beta_e}rdrd\theta}>0.
 	\end{align}
 	Hence, $\beta^{\star}_e$ increases with $\lambda_e$. The other conclusions can be obtained in a similar way.
 \end{proof}
 
 Corollary \ref{corollary_beta_jt} captures an inherent trade-off between the reliability and secrecy. We should choose a large redundant rate for dense eavesdroppers, whereas we should
 keep redundant rate low for a large target secrecy rate or for a remote user
  	  
 \subsubsection{OT Scheme}
 The SCDP in this case is can be obtained by computing the product of the connection probability $ p_c^{\rm OT}$ in \eqref{pc_ot} and the secrecy probability $ p_s^{\rm OT}$ in \eqref{ps_ot_nce}, which is given below,
 \begin{align}\label{psct_ot}
 \mathcal{P}_{scd}^{\rm OT} = \exp\Bigg(&-\frac{1}{K\rho }\sum_{k=1}^Kr_{b,k}^{\alpha}[\beta_s+(1+\beta_s)\beta_{e,k}]-\nonumber\\
 &2\lambda_e\int_0^\infty\int_0^{\pi}e^{-\frac{1}{K\rho }\sum_{k=1}^Kr_{k}^{\alpha}\beta_{e,k}}rdrd\theta\Bigg).
 \end{align}
{Compared with the JT scheme, the SBSs in the OT scheme can use different redundant rates for maximizing the SCDP.
We assume that the knowledge of SBS-user distances is known at the SBSs and can be exploited to determine the optimal redundant rates at different SBSs.} In order to jointly design the redundant rates $R_{e,k}$ for $k\in\mathcal{K}$, we recast \eqref{psct_ot} into a vector form, i.e., $\mathcal{P}_{scd}^{\rm OT} = e^{-\frac{\beta_s}{K\rho}\|\bm r_b\|_1}e^{-\Omega(\bm \beta_e)}$, where the auxiliary function $\Omega(\bm \beta_e)$ is given as below,
\begin{equation}\label{omega_beta}
\Omega(\bm \beta_e) = \frac{1+\beta_s}{K\rho}{\bm r^{\rm T}_b}\bm \beta_e + \lambda_e\int_0^\infty\int_0^{2\pi}e^{-\frac{1}{K\rho}{\bm r^{\rm T}_e}\bm \beta_e}rdrd\theta,
\end{equation}
with  $\bm{r}_b=[r^{\alpha}_{b,1},\cdots,r^{\alpha}_{b,K}]^{\rm T}\geq  0$, $\bm{r}_e=[r^{\alpha}_{1},\cdots,r^{\alpha}_{K}]^{\rm T}\geq  0$, and $\bm{ \beta}_e=[\beta_{e,1},\cdots,\beta_{e,K}]^{\rm T}\geq 0$. Apparently, to maximize $\mathcal{P}_{scd}^{\rm OT}$ we only need to tackle the following minimization problem,
\begin{equation}\label{problem_vector}
\min_{\bm \beta_e} \Omega(\bm \beta_e),~~~{\rm s.t.} ~~\bm \beta_e
\geq  0.
\end{equation}

We point out that the objective function $\Omega(\bm \beta_e)$ in \eqref{omega_beta} is strictly convex on $\bm\beta_e$ due to the summation of an affine function and an integral with
exponential terms. Generally, problem \eqref{problem_vector} can be numerically resolved using some gradient methods, e.g., Newton's method \cite{Boyd2004Convex}. However, Newton's method requires forming and storing the Hessian matrix repeatedly and the computation of the Newton step requires solving a set of linear equations. 
All these operations will bring the system a huge computational burden and thus resulting in a low system efficiency, particularly when the number of SBSs $K$ goes large. To reduce the computational complexity, we propose to process problem \eqref{problem_vector} through an alternating optimization (AO) as described below.
Denote $\left( \beta^{(n)}_{e,1},\cdots,\beta^{(n)}_{e,K}\right)$ as the AO iterate at the $n$-th iteration, and let $\hat{\bm{\beta}}^{(n)}_{e,k}=\bm{\beta}^{(n)}_e\setminus\beta^{(n)}_{e,k}$. We solve the following $K$ subproblems alternatively to obtain $\left( \beta^{(n)}_{e,1},\cdots,\beta^{(n)}_{e,K}\right)$ for $n=1,2,\cdots$
\begin{equation}\label{sub_problem}
\beta^{(n)}_{e,k} =  \arg\min_{\beta_{e,k}\ge 0}\Omega\left(\hat{\bm{\beta}}^{(n)}_{e,k},\beta_{e,k}\right).
\end{equation}
\begin{lemma}\label{opt_beta_ot_lemma}
	When the values of $\beta_{e,j}$ for $j\neq k$ are given, the  solution to problem \eqref{sub_problem} is 
	\begin{align}
	\beta^{\star}_{e,k} &
	=\begin{cases}
	0,& \frac{dQ\left(\hat{\bm{\beta}}^{(n)}_{e,k},\beta_{e,k}\right)}{d\beta_{e,k}}\big|_{\beta_{e,k}=0}\geq 0 ,\\
	\beta^{\circ}_{e,k},&\rm otherwise ,
	\end{cases}
	\end{align}
	where $\beta^{\circ}_{e,k}$	is the unique zero-crossing of the derivative given below,
	\begin{equation}\label{dq_ek}
	\frac{dQ\left(\hat{\bm{\beta}}^{(n)}_{e,k},\beta_{e,k}\right)}{d\beta_{e,k}} = \frac{(1+\beta_s){r_{b,k}}-2\lambda_e\int_0^\infty\int_0^{\pi}r_{e,k}e^{-\frac{{\bm r^{\rm T}_e}\bm \beta_e}{K\rho}}rdrd\theta}{K\rho}.
	\end{equation}
\end{lemma}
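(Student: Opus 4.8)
The plan is to treat \eqref{sub_problem} as a one-dimensional convex program in the scalar variable $\beta_{e,k}\ge 0$ and to characterise its minimiser through the sign of the derivative at the left endpoint, exactly mirroring the argument used for the scalar problem in Appendix~\ref{appendix_opt_beta_jt}. First I would invoke the strict convexity of $\Omega(\bm\beta_e)$ on the nonnegative orthant (established just below \eqref{problem_vector}): fixing all coordinates except the $k$-th at the values $\hat{\bm\beta}^{(n)}_{e,k}$ preserves strict convexity, so $\beta_{e,k}\mapsto\Omega(\hat{\bm\beta}^{(n)}_{e,k},\beta_{e,k})$ is strictly convex on $[0,\infty)$. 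For such a function the minimiser is unique and is pinned down by a single first-order condition: it equals $0$ precisely when the one-sided derivative there is nonnegative, and otherwise it is the unique interior stationary point.

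Next I would compute the derivative explicitly. Differentiating \eqref{omega_beta} with respect to $\beta_{e,k}$ under the double integral -- legitimate because the integrand and its $\beta_{e,k}$-derivative are, for $\beta_{e,k}$ in a neighbourhood of any point of $(0,\infty)$, dominated by an integrable function (the factor $r_{e,k}$ grows only polynomially in $r$ while $e^{-{\bm r^{\rm T}_e}\bm\beta_e/(K\rho)}$ decays super-exponentially in $r$) -- and using the angular symmetry $\int_0^{2\pi}(\cdot)\,d\theta=2\int_0^\pi(\cdot)\,d\theta$ gives precisely \eqref{dq_ek}. Differentiating once more yields $\tfrac{2\lambda_e}{(K\rho)^2}\int_0^\infty\int_0^\pi r_{e,k}^2\,e^{-{\bm r^{\rm T}_e}\bm\beta_e/(K\rho)}r\,dr\,d\theta>0$, which re-confirms strict convexity and, more importantly, shows that the derivative in \eqref{dq_ek} is \emph{strictly increasing} in $\beta_{e,k}$; hence it can vanish at most once.

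Then I would fix the sign of \eqref{dq_ek} at the two ends. As $\beta_{e,k}\to\infty$, dominated convergence (the integrand tends to $0$ for almost every $(r,\theta)$ since $r_{e,k}>0$ there, and is bounded by the integrable $\beta_{e,k}=0$ integrand) forces the integral term to $0$, so the derivative tends to $(1+\beta_s)r_{b,k}/(K\rho)>0$. Combining this with strict monotonicity gives the stated dichotomy: if the derivative at $\beta_{e,k}=0$ is already nonnegative, it stays nonnegative throughout $[0,\infty)$, the objective is nondecreasing, and the minimiser is $\beta^\star_{e,k}=0$; if it is negative at $\beta_{e,k}=0$, then, being strictly increasing and eventually positive, it crosses zero exactly once, at some $\beta^\circ_{e,k}\in(0,\infty)$, which -- as the unconstrained stationary point of a convex function -- is the constrained minimiser. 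This is exactly Lemma~\ref{opt_beta_ot_lemma}.

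The only genuine technical care needed is in the two interchanges just invoked -- differentiating (twice) under the integral and passing $\beta_{e,k}\to\infty$ inside it -- both of which reduce to producing an integrable dominating function; this follows from the super-exponential decay of $e^{-{\bm r^{\rm T}_e}\bm\beta_e/(K\rho)}$ in $r$ against the polynomial factors $r_{e,k}$, $r_{e,k}^2$ and $r$, using $r_k\sim r$ as $r\to\infty$. Everything else is routine one-dimensional convex-optimality bookkeeping, structurally identical to the scalar case of Theorem~\ref{theorem_opt_beta_jt}.
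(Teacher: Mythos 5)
Your proposal is correct and follows essentially the same route as the paper: the paper's own proof simply observes that $\Omega$ restricted to the $k$-th coordinate is convex, so the minimiser over $\beta_{e,k}\ge 0$ is $0$ when the derivative at $0$ is nonnegative and is otherwise the zero-crossing of the derivative. Your additional steps (explicit second derivative, the limit of \eqref{dq_ek} as $\beta_{e,k}\to\infty$ guaranteeing existence and uniqueness of $\beta^{\circ}_{e,k}$, and the dominated-convergence justification for differentiating under the integral) only supply rigour that the paper leaves implicit.
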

\begin{proof}
	Since $Q\left(\hat{\bm{\beta}}^{(n)}_{e,k},\beta_{e,k}\right)$ in \eqref{omega_beta} is a convex function of $\beta_{e,k}$, it arrives at the minimal value at $\beta_{e,k}=0$ if $\left({dQ\left(\hat{\bm{\beta}}^{(n)}_{e,k},\beta_{e,k}\right)}/{d\beta_{e,k}}\right)|_{\beta_{e,k}=0}\ge0$ or at the zero-crossing of ${dQ\left(\hat{\bm{\beta}}^{(n)}_{e,k},\beta_{e,k}\right)}/{d\beta_{e,k}}$ otherwise. 
\end{proof}
The value of $\beta^{\circ}_{e,k}$  can be efficiently calculated through a bisection search with ${dQ\left(\hat{\bm{\beta}}^{(n)}_{e,k},\beta^{\circ}_{e,k}\right)}/{d\beta^{\circ}_{e,k}}=0$.

Lemma \ref{opt_beta_ot_lemma} suggests that the remote SBS, e.g., with a large distance $r_{b,k}$ such that $\left({dQ\left(\hat{\bm{\beta}}^{(n)}_{e,k},\beta_{e,k}\right)}/{d\beta_{e,k}}\right)|_{\beta_{e,k}=0}\ge0$, should set zero redundant rate.
\begin{algorithm}[!t]
	\caption{AO Algorithm for Problem \eqref{problem_vector}}
	\begin{algorithmic}[1]\label{ao_algorithm}
		\STATE Initialize $n=1$, $\bm \beta^{(0)}_e \geq  0$, and assign $\epsilon$ a sufficiently small positive value, e.g., $\epsilon = 10^{-10}$;
		\STATE Update $\bm\beta^{(n)}_e\leftarrow \bm\beta^{(n-1)}_e$;
		\FOR{$k = 1$ to $K$}
		\IF{$\left({dQ\left(\hat{\bm{\beta}}^{(n)}_{e,k},\beta_{e,k}\right)}/{d\beta_{e,k}}\right)|_{\beta_{e,k}=0}\ge0$}
		\STATE $\beta^{(n)}_{e,k} \leftarrow 0$;
		\ELSE 
		\STATE Calculate $\beta^{(n)}_{e,k}$ through a bisection search with the equation  ${dQ\left(\hat{\bm{\beta}}^{(n)}_{e,k},\beta^{(n)}_{e,k}\right)}/{d\beta^{(n)}_{e,k}}=0$;
		\ENDIF
		\STATE  Update $\bm \beta^{(n)}_e$
		\ENDFOR
		\WHILE{$\left|\left[{\Omega\left(\bm\beta_e^{(n)}\right)}-{\Omega\left(\bm\beta_e^{(n-1)}\right)}\right]/{\Omega\left(\bm\beta_e^{(n-1)}\right)}\right|\ge\epsilon$}
		\STATE Update $n \leftarrow n + 1$;
		\STATE Repeat step 2 to step 10;
		\ENDWHILE
		\STATE Output $\bm\beta_e^{(n)}$
	\end{algorithmic}
\end{algorithm}
With Lemma \ref{opt_beta_ot_lemma}, we summarize the whole AO process in Algorithm \ref{ao_algorithm}. Notably, the proposed AO iterative algorithm produces descending objective values, i.e., $\Omega\left(\bm\beta^{(n)}_e\right)<\Omega\left(\bm\beta^{(n-1)}_e\right),\cdots,<\Omega\left(\bm\beta^{(0)}_e\right)$. {Moreover, it has a theoretically provable guarantee on the global optimality of our solution and its convergence.}
\begin{proposition}
	Every limit point $\bm{\beta}^{\star}_e$ of the iterates $\left\{\bm{\beta}^{(n)}_e\right\}$ generated by the AO process in \eqref{sub_problem} is a Karush-Kuhn-Tucker (KKT) point of the primal problem \eqref{problem_vector}.
\end{proposition}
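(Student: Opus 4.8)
The plan is to treat the AO recursion \eqref{sub_problem} as a Gauss--Seidel (block coordinate descent) scheme for the strictly convex, continuously differentiable program \eqref{problem_vector}, and to run the classical ``stationarity of limit points'' argument, whose hypotheses are met here because each block subproblem is solved \emph{exactly} and has a \emph{unique} minimizer (Lemma \ref{opt_beta_ot_lemma}). First I would verify that limit points exist at all: since every within-sweep block update is an exact minimization, the sequence $\{\Omega(\bm\beta_e^{(n)})\}$ is non-increasing and bounded below (on the feasible set $\Omega\geq 0$), hence convergent, and the iterates stay in the sublevel set $\mathcal{S}\triangleq\{\bm\beta_e\geq 0:\Omega(\bm\beta_e)\leq\Omega(\bm\beta_e^{(0)})\}$. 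Because the integral term in \eqref{omega_beta} is nonnegative while the affine term $\tfrac{1+\beta_s}{K\rho}\bm r_b^{\rm T}\bm\beta_e\to\infty$ as $\|\bm\beta_e\|\to\infty$ on the nonnegative orthant (using $\bm r_b>0$, i.e.\ no SBS sits exactly at the typical user), $\mathcal{S}$ is compact; thus $\{\bm\beta_e^{(n)}\}$ has at least one limit point $\bm\beta_e^{\star}\in\mathcal{S}$.

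The core step is to show that consecutive iterates coalesce, $\bm\beta_e^{(n)}-\bm\beta_e^{(n-1)}\to\bm 0$, and more precisely that every ``mixed'' vector arising within a sweep converges to the same limit along the relevant subsequence. I expect this to be the main obstacle. The idea: the decrease of $\Omega$ accrued over sweep $n$ equals $\Omega(\bm\beta_e^{(n-1)})-\Omega(\bm\beta_e^{(n)})\to 0$, and since each block step of that sweep strictly decreases $\Omega$ unless the block is already optimal, a compactness/contradiction argument — exploiting uniqueness of the block minimizers from Lemma \ref{opt_beta_ot_lemma} together with continuity of $\Omega$ — forces the magnitude of each block update to vanish. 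Consequently, if $\bm\beta_e^{(n_t)}\to\bm\beta_e^{\star}$ along a subsequence, then the vector $\bm\beta_e^{(n_t,k)}$ whose first $k$ coordinates have already been updated at sweep $n_t$ and whose remaining $K-k$ coordinates still carry their sweep-$(n_t-1)$ values also converges to $\bm\beta_e^{\star}$ for every $k\in\mathcal{K}$.

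Given this, the conclusion follows by passing to the limit in the optimality condition of each subproblem. By Lemma \ref{opt_beta_ot_lemma}, the $k$-th update at sweep $n_t$ satisfies the variational inequality
\[
\frac{\partial\Omega}{\partial\beta_{e,k}}\!\left(\bm\beta_e^{(n_t,k)}\right)\left(\beta_{e,k}-\beta_{e,k}^{(n_t)}\right)\ge 0,\qquad\forall\,\beta_{e,k}\ge 0 .
\]
Letting $t\to\infty$ and using continuity of $\nabla\Omega$ with $\bm\beta_e^{(n_t,k)}\to\bm\beta_e^{\star}$ and $\beta_{e,k}^{(n_t)}\to\beta_{e,k}^{\star}$ gives $\tfrac{\partial\Omega}{\partial\beta_{e,k}}(\bm\beta_e^{\star})(\beta_{e,k}-\beta_{e,k}^{\star})\ge 0$ for all $\beta_{e,k}\ge 0$ and every $k$. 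Stacking the $K$ inequalities yields
\[
\nabla\Omega(\bm\beta_e^{\star})^{\rm T}\left(\bm\beta_e-\bm\beta_e^{\star}\right)\ge 0,\qquad\forall\,\bm\beta_e\ge 0,
\]
which, since the constraints $\bm\beta_e\ge 0$ are linear (so a constraint qualification holds automatically), is exactly the KKT system of \eqref{problem_vector}: the multipliers $\mu_k=\partial\Omega(\bm\beta_e^{\star})/\partial\beta_{e,k}$ are nonnegative and satisfy complementary slackness $\mu_k\beta_{e,k}^{\star}=0$. Finally I would remark that, because $\Omega$ is strictly convex, this KKT point is in fact the unique global minimizer of \eqref{problem_vector}, so the whole sequence $\{\bm\beta_e^{(n)}\}$ converges to it.
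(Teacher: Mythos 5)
Your proposal is correct and, at its core, reaches the conclusion by the same mechanism the paper uses: the first-order optimality conditions of the $K$ block subproblems, evaluated at the limit point, concatenate into exactly the KKT system of \eqref{problem_vector}, with the multipliers read off as $\lambda^{\star}_k=\partial\Omega(\bm{\beta}^{\star}_e)/\partial\beta_{e,k}$. The difference is one of completeness. The paper's proof is purely algebraic: it writes the KKT conditions of the primal problem, eliminates the multipliers, observes that Lemma \ref{opt_beta_ot_lemma} yields the analogous conditions for each subproblem, and declares their combination to be the full KKT system. In doing so it silently assumes (i) that limit points exist at all, and (ii) that each block's optimality condition --- which the algorithm enforces only at the \emph{intermediate}, within-sweep vectors --- survives passage to the limit, i.e., that the mixed vectors $\bm{\beta}^{(n_t,k)}_e$ converge to the same $\bm{\beta}^{\star}_e$ as the end-of-sweep iterates. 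You correctly single out (ii) as the real content of the proposition and supply the standard Gauss--Seidel argument covering both gaps: coercivity of the affine term of $\Omega$ on the nonnegative orthant gives compactness of the sublevel set and hence existence of limit points; monotone descent combined with uniqueness of each block minimizer (from strict convexity) forces consecutive and within-sweep iterates to coalesce; and continuity of $\nabla\Omega$ then lets you pass to the limit in each block's variational inequality before stacking. Your version is thus a strict superset of the paper's: same skeleton, but with the two tacit hypotheses actually verified, plus the correct closing observation that strict convexity upgrades the statement to convergence of the entire sequence to the unique global minimizer.
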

\begin{proof}
Since the objective function $\Omega(\bm \beta_e)$ in \eqref{problem_vector} is a strictly convex function of $\bm\beta_e$, the KKT conditions are necessary and sufficient for the solution to problem \eqref{problem_vector} \cite{Boyd2004Convex}:
\begin{subequations}
	\begin{align}\label{kkt_min}
	&\triangledown\Omega\left(\bm{\beta}^{\star}_e\right) - \bm{\lambda}^{\star} =  0,\\~&\bm{\beta}^{\star}_e\ge 0,~\bm{\lambda}^{\star}\geq  0,~\lambda^{\star}_k\beta^{\star}_{e,k}= 0,~k \in\mathcal{K},
	\end{align}
\end{subequations}
where $\bm{\lambda}^{\star}$ is the Lagrange multiplier introduced for the inequality constraints given in \eqref{problem_vector}. Note that since $\bm{\lambda}^{\star}$ acts as a slack variable in \eqref{kkt_min}, it actually can be eliminated, leaving
\begin{align}\label{kkt}
&	\bm{\beta}^{\star}_e\geq 0,~\triangledown\Omega\left(\bm{\beta}^{\star}_e\right)  \geq  0,~\beta^{\star}_{e,k}\frac{d\Omega(\bm\beta^{\star}_{e})}{d\beta^{\star}_{e,k}}= 0,~k \in\mathcal{K}.
\end{align}
Now, let us recall Lemma \ref{opt_beta_ot_lemma}, from which we have 
\begin{align}\label{kkt_sub}
{\beta}^{\star}_{e,k}\ge  0,~\frac{dQ\left(\hat{\bm{\beta}}^{(n)}_{e,k},\beta^{\star}_{e,k}\right)}{d\beta^{\star}_{e,k}}\ge 0,~\beta^{\star}_{e,k}\frac{dQ\left(\hat{\bm{\beta}}^{(n)}_{e,k},\beta^{\star}_{e,k}\right)}{d\beta^{\star}_{e,k}}=0.
\end{align}
Evidently, \eqref{kkt_sub} gives the KKT conditions for the $k$-th subproblem \eqref{sub_problem}. 
The combination of the KKT conditions for the $K$ subproblems is exactly the KKT conditions for problem \eqref{problem_vector}.
\end{proof}

For a simplified case where all the SBSs use the same redundant rate $R_e$, a more computation-convenient solution to problem \eqref{problem_vector} can be provided by the following theorem.
\begin{theorem}
	If all the SBSs use the same redundant rate $R_e$, the optimal $\beta^{\star}_e$ that minimizes $Q\left(\beta_{e}\right)$ in \eqref{problem_vector} is 
	the unique zero-crossing of the derivative ${dQ\left(\beta_{e}\right)}/{d\beta_{e}}$ given below 
	\begin{equation}
	\frac{dQ\left(\beta_{e}\right)}{d\beta_{e}} = \frac{1+\beta_s}{K\rho}\|\bm r_b\|_1 - \int_0^\infty\int_0^{2\pi}\frac{\lambda_e\|\bm r_e\|_1}{K\rho}e^{-\frac{{\|\bm r_e\|_1} \beta_e}{K\rho}}rdrd\theta.
	\end{equation}
\end{theorem}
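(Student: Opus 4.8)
The plan is to reduce this statement to the scalar convexity result already established in Theorem~\ref{theorem_opt_beta_jt}. When all SBSs share a common redundant rate $R_e$ (hence a common $\beta_e$), the vector $\bm\beta_e = \beta_e\,\bm 1$, so the bilinear terms in $\Omega(\bm\beta_e)$ collapse: ${\bm r_b^{\rm T}}\bm\beta_e = \beta_e\|\bm r_b\|_1$ and ${\bm r_e^{\rm T}}\bm\beta_e = \beta_e\|\bm r_e\|_1$. Substituting these into \eqref{omega_beta} yields a one-dimensional objective
\begin{equation*}
Q(\beta_e) = \frac{(1+\beta_s)}{K\rho}\|\bm r_b\|_1\,\beta_e + \lambda_e\int_0^\infty\int_0^{2\pi} e^{-\frac{\|\bm r_e\|_1\,\beta_e}{K\rho}}\,rdrd\theta,
\end{equation*}
which is exactly $\Omega$ restricted to the ray $\{\beta_e\bm 1:\beta_e\ge 0\}$. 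First I would record this substitution explicitly, noting that $\|\bm r_e\|_1 = \sum_{k=1}^K r_k^\alpha$ is a strictly positive function of $(r,\theta)$ and that $\|\bm r_b\|_1 = \sum_{k=1}^K r_{b,k}^\alpha$ is a positive constant.

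Next I would differentiate $Q(\beta_e)$ term by term. The affine term contributes the constant $\frac{(1+\beta_s)}{K\rho}\|\bm r_b\|_1$; differentiating under the integral sign (justified because the integrand and its $\beta_e$-derivative are dominated uniformly on compact $\beta_e$-intervals, just as in Appendix~\ref{appendix_opt_beta_jt}) gives the stated expression for ${dQ(\beta_e)}/{d\beta_e}$. I would then argue convexity: the second derivative is $\frac{\lambda_e}{K\rho^2}\int_0^\infty\int_0^{2\pi}\|\bm r_e\|_1^2\, e^{-\|\bm r_e\|_1\beta_e/(K\rho)}\,rdrd\theta > 0$, so $Q$ is strictly convex on $(0,\infty)$ — alternatively, this is immediate from the strict convexity of $\Omega$ on $\mathbb R_{\ge 0}^K$ proved earlier, since the restriction of a convex function to a line is convex. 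Hence ${dQ}/{d\beta_e}$ is strictly increasing.

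Finally I would pin down the location of the minimizer. As $\beta_e\to\infty$ the integral in ${dQ}/{d\beta_e}$ vanishes by dominated convergence, so the derivative tends to the positive constant $\frac{(1+\beta_s)}{K\rho}\|\bm r_b\|_1 > 0$; as $\beta_e\to 0^+$ the integral $\lambda_e\int\!\!\int \|\bm r_e\|_1\, r\,drd\theta/(K\rho)$ diverges (the $\|\bm r_e\|_1 \sim r^\alpha$ growth makes $\int^\infty r^{\alpha+1}dr$ infinite), so the derivative tends to $-\infty$. By strict monotonicity and continuity there is a unique zero-crossing $\beta_e^\star\in(0,\infty)$, which is the global minimizer of $Q$ and therefore the maximizer of $\mathcal P_{scd}^{\rm OT}$. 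I expect the only mildly delicate point to be the behaviour at $\beta_e\to 0^+$: one must confirm the integral genuinely diverges there (so that the unconstrained zero-crossing is interior and the constraint $\beta_e>0$ is never active), which follows from the path-loss growth of $\|\bm r_e\|_1$ in $r$; everything else is a direct specialization of the already-proven vector and scalar results.
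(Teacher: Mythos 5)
Your proposal is correct and follows essentially the same route the paper takes: the paper states this theorem without an explicit proof precisely because it is the one-dimensional specialization of $\Omega$ along the ray $\bm\beta_e=\beta_e\bm 1$, handled exactly as in the proof of Theorem~\ref{theorem_opt_beta_jt} in Appendix~\ref{appendix_opt_beta_jt} (strict convexity from the positive second derivative, derivative tending to the positive constant $\frac{1+\beta_s}{K\rho}\|\bm r_b\|_1$ as $\beta_e\to\infty$, and diverging to $-\infty$ as $\beta_e\to 0^+$ because $\|\bm r_e\|_1\sim Kr^\alpha$ makes the integral non-integrable at $\beta_e=0$). The only blemish is the constant in your second derivative, which should be $\lambda_e/(K\rho)^2$ rather than $\lambda_e/(K\rho^2)$; this does not affect the positivity argument.
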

The value of $\beta^{\star}_e$ can be efficiently calculated via a bisection search with equation ${dQ\left(\beta_{e}\right)}/{d\beta_{e}}=0$. Some insights into the solution $\beta^{\star}_e$ that are similar to Corollary \ref{corollary_beta_jt} can be developed.

 \subsubsection{CM Scheme}
The SCDP in the CM scheme has the same expression as in the JT scheme, only with $R_s$ increasing to $\delta R_s$. Therefore, the optimal $\beta_e$ that maximizes the SCDP in the CM scheme shares the same form as in Theorem \ref{theorem_opt_beta_jt}, simply by replacing $\beta_s$ with $2^{\delta R_s}-1$.
 
\subsection{Optimization of  Caching Assignment Proportion $\phi$}
This subsection determines the optimal caching assignment proportion $\phi$ that maximizes the overall SCDP $\mathcal{P}_{scd}$ in \eqref{def_overall_psct} with $p_{tr}^{\rm S}$ for $\rm S\in\{JT, OT, CM\}$ given in \eqref{p_jt} and \eqref{p_cm}. 
Note that the summation of discrete sequence aroused by $p_{tr}^{\rm S}$ hampers the optimization of the overall SCDP. Fortunately, the sum of the Zipf probabilities can be approximated as  \cite{Taghizadeh2013Distributed}
\begin{equation}\label{appr_zipf}
\sum_{n=1}^{M} f_n \approx\frac{M^{1-\gamma}-1}{N^{1-\gamma}-1}.
\end{equation}
Invoking \eqref{appr_zipf} with \eqref{p_jt} and \eqref{p_cm} and plugging the obtained results into \eqref{def_overall_psct} with the integer $\lfloor \phi L\rfloor$ replaced with the continuous quantity $ \phi L$, $\mathcal{P}_{scd}$ can be simplified as a continuous function of $\phi$, 
\begin{equation}\label{overall_psct}
\mathcal{P}_{scd}\approx
\frac{\mathcal{\hat P}_{jo}\phi^{1-\gamma}+\mathcal{\hat P}_{oc}[K-K\phi+\phi]^{1-\gamma}-\mathcal{\hat P}_{jc}L^{\gamma-1}}{L^{\gamma-1}(N^{1-\gamma}-1)}+\mathcal{P}^{\rm CM}_{scd},
\end{equation}
where $\mathcal{\hat P}_{jo}=\mathcal{P}^{\rm JT}_{scd}-\mathcal{P}^{\rm OT}_{scd}$, $\mathcal{\hat P}_{oc}=\mathcal{P}^{\rm OT}_{scd}-\mathcal{P}^{\rm CM}_{scd}$, and $\mathcal{\hat P}_{jc}=\mathcal{P}^{\rm JT}_{scd}-\mathcal{P}^{\rm CM}_{scd}$ denote the SCDP differences, with $\mathcal{P}^{\rm S}_{scd}$ being the SCDP for the scheme $\rm S\in\{\rm JT, OT, CM\}$.
Before proceeding to derive the optimal $\phi$ that maximizes $\mathcal{P}_{scd}$, we give the following lemma.
\begin{lemma}\label{jt_cm}
The	JT scheme gives a larger SCDP than does the CM scheme, i.e., $\mathcal{P}^{\rm JT}_{scd}>\mathcal{P}^{\rm CM}_{scd}$.
\end{lemma}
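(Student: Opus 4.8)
The plan is to exploit a structural fact: the CM-scheme SCDP is \emph{the same function} of the design variables as the JT-scheme SCDP, only with the target secrecy rate $R_s$ inflated to $\delta R_s$ with $\delta>1$; consequently, viewed as functions of the redundant parameter $\beta_e$, the CM objective lies strictly below the JT objective everywhere, and passing to the respective maxima yields the strict inequality.

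Concretely, I would first recall from \eqref{psct_jt} and \eqref{q_jt} that, as a function of $\beta_e$, the JT SCDP is $\mathcal{P}_{scd}^{\rm JT}(\beta_e)=e^{-A\beta_s}e^{-Q(\beta_e)}$ with $Q(\beta_e)$ as in \eqref{q_jt}, and that — per the CM discussion in Sections~III and~IV-A — $\mathcal{P}_{scd}^{\rm CM}(\beta_e)$ is the identical expression with $\beta_s$ replaced by $\beta_{s,\delta}\triangleq 2^{\delta R_s}-1$; since $\delta>1$ and $R_s>0$ we have $\beta_{s,\delta}>\beta_s$. Forming the ratio at a common $\beta_e$, the secrecy integral inside $Q$ — which does not depend on $\beta_s$ — cancels, leaving
\[
\frac{\mathcal{P}_{scd}^{\rm CM}(\beta_e)}{\mathcal{P}_{scd}^{\rm JT}(\beta_e)}=\exp\!\big(-A(\beta_{s,\delta}-\beta_s)(1+\beta_e)\big)<1,\qquad\forall\,\beta_e\ge 0,
\]
because $A>0$, $\beta_{s,\delta}-\beta_s>0$ and $1+\beta_e>0$. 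Thus the CM objective is pointwise strictly dominated by the JT objective on the feasible ray.

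To conclude, I would pass to the $R_e$-optimized quantities appearing in the lemma. Let $\beta_e^{\rm CM\star}$ be the maximizer of $\mathcal{P}_{scd}^{\rm CM}(\beta_e)$, which exists and is unique by the convexity and endpoint behaviour established in Theorem~\ref{theorem_opt_beta_jt} (applied with $\beta_{s,\delta}$ in place of $\beta_s$). Then
\[
\mathcal{P}_{scd}^{\rm CM}=\mathcal{P}_{scd}^{\rm CM}(\beta_e^{\rm CM\star})<\mathcal{P}_{scd}^{\rm JT}(\beta_e^{\rm CM\star})\le\max_{\beta_e>0}\mathcal{P}_{scd}^{\rm JT}(\beta_e)=\mathcal{P}_{scd}^{\rm JT},
\]
which is precisely $\mathcal{P}^{\rm JT}_{scd}>\mathcal{P}^{\rm CM}_{scd}$. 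The computation is routine; the only point needing care is conceptual — namely that $\mathcal{P}^{\rm JT}_{scd}$ and $\mathcal{P}^{\rm CM}_{scd}$ are the \emph{optimized} SCDPs, so the comparison must be made between the two maxima via the pointwise-domination inequality rather than by evaluating both at one shared $\beta_e$, and it is the cancellation of the secrecy integral in the ratio that makes even that pointwise comparison transparent.
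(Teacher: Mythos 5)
Your proposal is correct, but it takes a genuinely different route from the paper's. The paper argues via an envelope-theorem computation: it writes the \emph{optimized} JT SCDP as $e^{-W(\beta_s)}$ with $W(\beta_s)=A\beta_s+Q(\beta_e^{\star}(\beta_s))$, differentiates in $\beta_s$, notes that the indirect term $\frac{d\beta_e^{\star}}{d\beta_s}\frac{dQ(\beta_e^{\star})}{d\beta_e^{\star}}$ vanishes by the first-order condition \eqref{opt_beta_jt}, and concludes $dW/d\beta_s=A(1+\beta_e^{\star})>0$, so the maximal SCDP is decreasing in $\beta_s$ and the claim follows from $2^{\delta R_s}-1>2^{R_s}-1$. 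You instead establish pointwise strict domination of the CM objective by the JT objective at every fixed $\beta_e$ --- your ratio $\exp\left(-A(\beta_{s,\delta}-\beta_s)(1+\beta_e)\right)<1$ is computed correctly from \eqref{psct_jt} and $\beta_t=\beta_s+(1+\beta_s)\beta_e$ --- and then pass to the maxima via $\mathcal{P}^{\rm CM}_{scd}(\beta_e^{\rm CM\star})<\mathcal{P}^{\rm JT}_{scd}(\beta_e^{\rm CM\star})\le\mathcal{P}^{\rm JT}_{scd}$. Both arguments are valid and you correctly identify the one conceptual pitfall (the quantities being compared are the optimized ones). Your version is somewhat more elementary and arguably more robust: it does not require $\beta_e^{\star}(\beta_s)$ to be differentiable, which the paper's $d\beta_e^{\star}/d\beta_s$ tacitly assumes and would strictly speaking need an implicit-function-theorem justification. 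The paper's envelope argument, in exchange, delivers the slightly stronger quantitative fact that the optimized SCDP is monotonically decreasing in $\beta_s$ over the entire parameter range, not merely at the two values being compared.
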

\begin{proof}
	Note that $\mathcal{P}^{\rm CM}_{scd}$ shares the same expression as $\mathcal{P}^{\rm JT}_{scd}$ in \eqref{psct_jt} only with $\beta_s$ increasing from $2^{R_s}-1$ to $2^{\delta R_s}-1$. Then, to complete the proof we only need to prove that the maximal $\mathcal{P}^{\rm JT}_{scd}$ with the optimal $\beta_e^{\star}$ given in Theorem \ref{theorem_opt_beta_jt} decreases with $\beta_s$. By re-expressing $\mathcal{P}^{\rm JT}_{scd}=e^{-W(\beta_s)}$ where  $W(\beta_s)=A\beta_s+Q(\beta^{\star}_e(\beta_s))$ with $A$ and $Q(\beta^{\star}_e(\beta_s))$ given in \eqref{q_jt}, it is also equivalent to proving that $W(\beta_s)$ is an increasing function of $\beta_s$. The derivative ${dW(\beta_s)}/{d\beta_s}$ is given by
	\begin{equation}
	\frac{dW(\beta_s)}{d\beta_s} = A(1+\beta^{\star}_e)+\frac{d\beta^{\star}_e}{d\beta_s}  \frac{dQ(\beta^{\star}_e)}{d\beta^{\star}_e}.
	\end{equation}
From \eqref{opt_beta_jt} we know that ${dQ(\beta^{\star}_e)}/{d\beta^{\star}_e}=0$. Hence, we have ${dW(\beta_s)}/{d\beta_s}>0$.
\end{proof}
\begin{theorem}\label{opt_phi_theorem}
With the proposed hybrid caching placement strategy, the optimal proportion $\phi$  that maximizes $\mathcal{P}_{scd}$ in \eqref{overall_psct} is given by
\begin{align}\label{po}
\phi^{\star}
=\begin{cases}
1,&\mathcal{\hat P}_{jo}>(K-1)\mathcal{\hat P}_{oc},\\
0, &\mathcal{\hat P}_{jo}<0,\\
\frac{1}{1+\frac{1}{K}\left[\left(\frac{(K-1)\mathcal{\hat P}_{oc}}{\mathcal{\hat P}_{jo}}\right)^{\frac{1}{\gamma}}-1\right]},&\rm otherwise.
\end{cases}
\end{align}
\end{theorem}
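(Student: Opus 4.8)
The plan is to treat the right‑hand side of \eqref{overall_psct} as a smooth function of $\phi$ on the interval $[0,1]$ and to find its maximizer by a first‑order analysis, noting first that the additive terms $-\mathcal{\hat P}_{jc}L^{\gamma-1}/[L^{\gamma-1}(N^{1-\gamma}-1)]$ and $\mathcal{P}^{\rm CM}_{scd}$ are constant in $\phi$ and so do not influence the argmax. Writing $K-K\phi+\phi=K-(K-1)\phi$ and differentiating, one obtains
\begin{equation*}
\frac{d\mathcal{P}_{scd}}{d\phi}=\frac{1-\gamma}{L^{\gamma-1}(N^{1-\gamma}-1)}\Big[\mathcal{\hat P}_{jo}\,\phi^{-\gamma}-(K-1)\mathcal{\hat P}_{oc}\,\big(K-(K-1)\phi\big)^{-\gamma}\Big].
\end{equation*}
The first thing I would record is that the prefactor $(1-\gamma)/[L^{\gamma-1}(N^{1-\gamma}-1)]$ is strictly positive for every admissible $\gamma\neq1$: the factors $1-\gamma$ and $N^{1-\gamma}-1$ change sign together at $\gamma=1$, while $L^{\gamma-1}>0$. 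Hence the sign of $d\mathcal{P}_{scd}/d\phi$ on $(0,1]$ equals that of $D(\phi)\triangleq\mathcal{\hat P}_{jo}\,\phi^{-\gamma}-(K-1)\mathcal{\hat P}_{oc}\,\big(K-(K-1)\phi\big)^{-\gamma}$.

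The second ingredient is the sign relation among the SCDP differences provided by Lemma \ref{jt_cm}: since $\mathcal{\hat P}_{jc}=\mathcal{\hat P}_{jo}+\mathcal{\hat P}_{oc}>0$, the quantities $\mathcal{\hat P}_{jo}$ and $\mathcal{\hat P}_{oc}$ cannot both be non‑positive. With this I would split into the three regimes of the statement. If $\mathcal{\hat P}_{jo}<0$, then $\mathcal{\hat P}_{oc}>-\mathcal{\hat P}_{jo}>0$, both terms of $D$ are negative on $(0,1]$, so $\mathcal{P}_{scd}$ is strictly decreasing and $\phi^{\star}=0$. If $\mathcal{\hat P}_{jo}>(K-1)\mathcal{\hat P}_{oc}$, I would show $D(\phi)>0$ throughout $[0,1]$: this is immediate when $\mathcal{\hat P}_{oc}\le0$ (which here forces $\mathcal{\hat P}_{jo}>0$, so both terms of $D$ are non‑negative), and when $\mathcal{\hat P}_{oc}>0$ it follows because $D$ is strictly decreasing on $(0,1]$ — its derivative equals $-\gamma\mathcal{\hat P}_{jo}\phi^{-\gamma-1}-\gamma(K-1)^2\mathcal{\hat P}_{oc}(K-(K-1)\phi)^{-\gamma-1}<0$ — together with $D(1)=\mathcal{\hat P}_{jo}-(K-1)\mathcal{\hat P}_{oc}>0$; in either case $\mathcal{P}_{scd}$ is increasing and $\phi^{\star}=1$.

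In the remaining regime $0\le\mathcal{\hat P}_{jo}\le(K-1)\mathcal{\hat P}_{oc}$ — which by Lemma \ref{jt_cm} again forces $\mathcal{\hat P}_{oc}>0$ — $D$ is strictly decreasing on $(0,1]$ with $D(0^{+})=+\infty$ (when $\mathcal{\hat P}_{jo}>0$) and $D(1)=\mathcal{\hat P}_{jo}-(K-1)\mathcal{\hat P}_{oc}\le0$, so $D$ has a unique zero $\phi^{\star}\in(0,1]$, $\mathcal{P}_{scd}$ increases on $(0,\phi^{\star})$ and decreases on $(\phi^{\star},1)$, and $\phi^{\star}$ is the global maximizer. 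Setting $D(\phi^{\star})=0$ gives $\phi^{\star}/\big(K-(K-1)\phi^{\star}\big)=\big(\mathcal{\hat P}_{jo}/[(K-1)\mathcal{\hat P}_{oc}]\big)^{1/\gamma}$, and solving this linear equation for $\phi^{\star}$ produces the closed form in \eqref{po}; the boundary case $\mathcal{\hat P}_{jo}=0$ is recovered in the limit $\phi^{\star}\to0$.

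I expect the main obstacle to be organisational rather than computational: carefully partitioning the $(\mathcal{\hat P}_{jo},\mathcal{\hat P}_{oc})$‑plane so that the three cases in \eqref{po} are exhaustive and mutually consistent, and verifying in each that the stationary point located is a maximum rather than a minimum — equivalently, that once $\mathcal{\hat P}_{jo},\mathcal{\hat P}_{oc}\ge0$ the map $\phi\mapsto\mathcal{P}_{scd}$ is concave on $(0,1]$, which also follows from $x\mapsto x^{1-\gamma}$ being concave for $\gamma<1$ and convex for $\gamma>1$ combined with the sign of the prefactor. The algebra reducing $D(\phi^{\star})=0$ to \eqref{po} is then routine.
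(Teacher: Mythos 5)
Your proposal is correct and follows essentially the same route as the paper's Appendix~C proof: differentiate the approximated $\mathcal{P}_{scd}$ in $\phi$, use Lemma~\ref{jt_cm} (i.e., $\mathcal{\hat P}_{jc}>0$) to split into the sign regimes of $\mathcal{\hat P}_{jo}$ and $\mathcal{\hat P}_{oc}$, invoke concavity in the mixed regime, and solve the first-order condition for the closed form. Your treatment is, if anything, slightly more careful than the paper's on the positivity of the prefactor for both $\gamma<1$ and $\gamma>1$ and on the exhaustiveness of the three cases.
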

\begin{proof}
	Please refer to Appendix \ref{appendix_opt_phi}.
\end{proof}

Theorem \ref{opt_phi_theorem} shows that  the SCDP difference between different transmission schemes is critical to the optimal caching assignment. Specifically, when $\mathcal{P}^{\rm CM}_{scd}$ exceeds $\mathcal{P}^{\rm OT}_{scd}$ or when $\mathcal{P}^{\rm JT}_{scd}-\mathcal{P}^{\rm OT}_{scd}$ is $K-1$ times larger than $\mathcal{P}^{\rm OT}_{scd}-\mathcal{P}^{\rm CM}_{scd}$, we have $\phi^{\star} = 1$, meaning that caching the MPFs is more conductive. As $\mathcal{P}^{\rm OT}_{scd}$ increases, the optimal $\phi^{\star}$ becomes smaller, i.e., a larger proportion of the cache unit should be assigned for the DSFs.
We also can prove that the optimal $\phi^{\star}$ increases with the content popularity skewness $\gamma$. The reason behind is that as the content popularity becomes more concentrated (i.e., a larger $\gamma$), the benefit of caching different files becomes limited.
 
\section{Simulation Results}
\begin{figure}[!t]
	\centering
	\includegraphics[width = 3.0in]{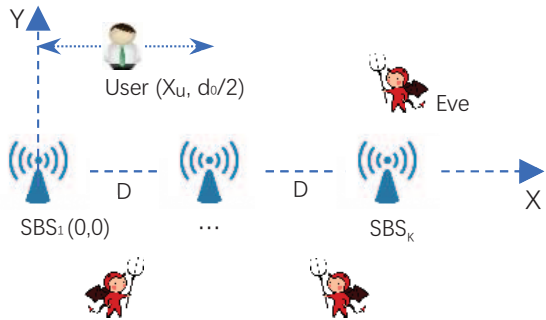}
	\caption{Model used for experiments. $K$ SBSs are placed along the horizontal axis with an identical distance $D$, and the $k$-th nearest SBS to the origin is located at $(0,(k-1)D)$. The user moves along the horizontal direction with a location $(X_u,d_0/2)$. Eavesdroppers are randomly distributed according to a PPP. }
	\label{SIM_SET}
\end{figure}
In this section, we present simulation results to validate our
theoretical analysis. For simplicity, we consider a two-dimensional system model as illustrated in Fig. \ref{SIM_SET}. {Without loss of generality, we set a reference distance $d_0=100$ m and a reference density $\lambda_0=10^{-6}$ ${\rm {nodes/  m}^2}$. Unless specified otherwise, we fix the normalized SNR $\rho = {P}/{(WN_0)}=10$ dB, the path-loss exponent $\alpha = 4$, the distance between two adjacent SBSs $D=d_0$, the content library size $N=100$, and the SBS caching capacity $L=20$. }

\begin{figure}[!t]
	\centering
	\includegraphics[width = 3.0in]{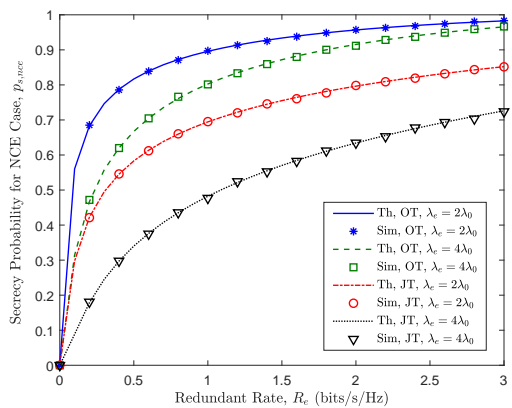}
	\caption{Secrecy probability $p_{s,nce}$ vs. $R_{e}$ for different values of $\lambda_e$, with $K = 3$.}
	\label{PS_NCE}
\end{figure}
\begin{figure}[!t]
	\centering
	\includegraphics[width = 3.0in]{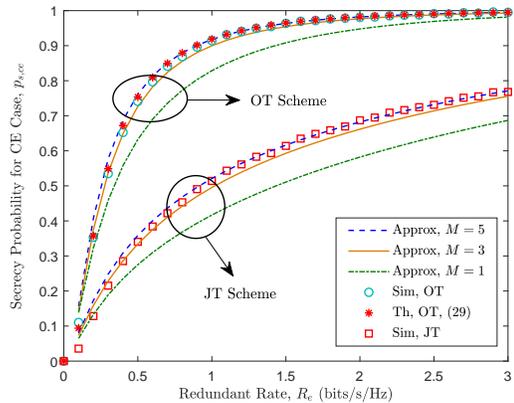}
	\caption{Secrecy probability $p_{s,ce}$ vs. $R_{e}$ for different values of $M$, with $K = 3$, $D = 6d_0$, and $\lambda_e = 3\lambda_0$.}
	\label{PS_CE}
\end{figure}
Fig. \ref{PS_NCE} and Fig. \ref{PS_CE} depict the secrecy probabilities versus the redundant rates $R_e$ in the NCE and CE cases, respectively. The Monte-Carlo simulation results match well with the theoretical values. Both figures verify the superiority of the OT scheme over the JT scheme in terms of transmission secrecy. As expected, the secrecy probability increases with $R_e$ and decreases with $\lambda_e$. Fig. \ref{PS_CE} shows that the approximate results given in \eqref{ps_jt_ce} and \eqref{ps_ott} coincide well with the real ones when $M = 5$. We also find that the results in \eqref{ps_ce_exact} approach the simulated ones in \eqref{ps_oft}, particularly for a large distance $D$, e.g., $D=6d_0$. This is because, for a sufficiently large $D$, the correlation in the distances between an eavesdropper and any two SBSs can be ignored.

\begin{figure}[!t]
	\centering
	\includegraphics[width = 3.0in]{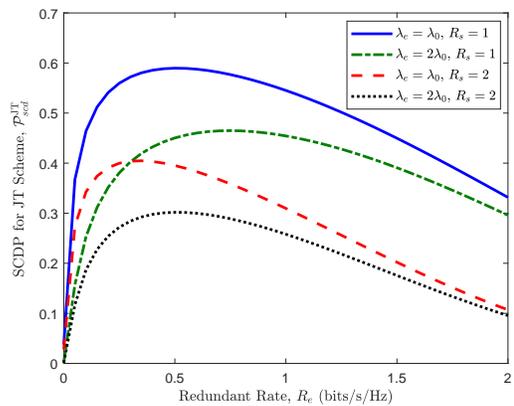}
	\caption{SCDP $\mathcal{P}_{scd}^{\rm JT}$ vs. $R_{e}$ for different values of $\lambda_e$ and $R_s$, with $K = 3$, and $X_u = 3d_0$.}
	\label{SCDP_JT}
\end{figure}
\begin{figure}[!t]
	\centering
	\includegraphics[width = 3.0in]{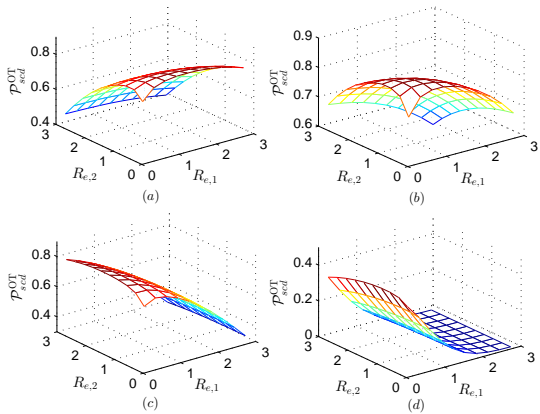}
	\caption{SCDP $\mathcal{P}_{scd}^{\rm OT}$ vs. $R_{e,1}$ and $R_{e,2}$ for  $X_u=\{0.3,0.5,0.8,1.5\}d_0$ in $(a)$-$(d)$, with $K = 2$, $\lambda_e = \lambda_0$, and $R_s = 1$ bit/s/Hz.}
	\label{SCDP_OT}
\end{figure}
Fig. \ref{SCDP_JT} and Fig. \ref{SCDP_OT} plot the SCDP $\mathcal{P}_{scd}^{\rm S}$ as a function of the redundant rate $R_e$ in the JT and OT schemes, respectively. Fig. \ref{SCDP_JT} shows that the SCDP $\mathcal{P}_{scd}^{\rm JT}$ in the JT scheme first increases and then decreases with $R_e$, just as proved in Theorem \ref{theorem_opt_beta_jt}. We also find that the optimal $R_e$ that maximizes $\mathcal{P}_{scd}^{\rm JT}$ increases with a larger $\lambda_e$ or a smaller $R_s$, verifying the insights obtained in Corollary \ref{corollary_beta_jt}. Fig. \ref{SCDP_OT} shows how the SCDP $\mathcal{P}_{scd}^{\rm OT}$ in the OT scheme varies with the redundant rates $R_{e,k}$ at different SBSs when the user is located differently. Sub-figures $(a)$ to $(d)$ show that $\mathcal{P}_{scd}^{\rm OT}$ is dominated by the redundant rate at the closer SBS to the user. This suggests that in order to improve the SCDP, we should set zero redundant rate at those remote SBSs and should carefully design the redundant rates at those neighboring SBSs, just as indicated in Lemma \ref{opt_beta_ot_lemma}.

\begin{figure}[!t]
	\centering
	\includegraphics[width = 3.0in]{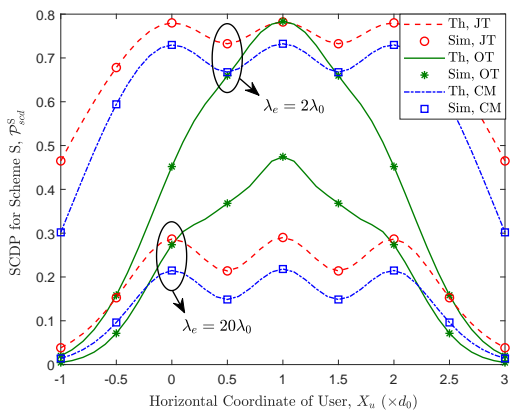}
	\caption{SCDP $\mathcal{P}_{scd}^{\rm S}$ for $\rm S\in\{\rm JT, OT, CM\}$ vs. $X_u$ for different values of $\lambda_e$, with $K = 3$, $\delta = 2$, and $R_s = 1$ bit/s/Hz.}
	\label{SCDP_REAL_B_LE}
\end{figure}
Fig. \ref{SCDP_REAL_B_LE} illustrates how the SCDP is affected by the geographical relationship between the user and the SBSs in various transmission schemes. With the experimental model described in Fig. \ref{SIM_SET} in mind, we find that for the JT scheme, if only the user moves close to one of the SBSs, the SCDP can be remarkably improved. Whereas for the OT scheme, only if the user is located approximately at the center of the SBSs, or there is no significant difference among the distances between the user and different SBSs, a high SCDP can be achieved; otherwise the SCDP performance is severely degraded. In addition, Fig. \ref{SCDP_REAL_B_LE} shows that the JT scheme outperforms the CM scheme, just as proved in Lemma \ref{jt_cm}. However, whether the JT or the OT scheme is superior depends on the specific transmission environment. For example, as can be seen from Fig. \ref{SCDP_REAL_B_LE}, in a sparse eavesdropper scenario, the JT scheme outperforms the OT scheme; whereas in a dense eavesdropper case and when the user is located at the center of the SBSs, the OT scheme provides a higher benefit for the secure content delivery. 

\begin{figure}[!t]
	\centering
	\includegraphics[width = 3.0in]{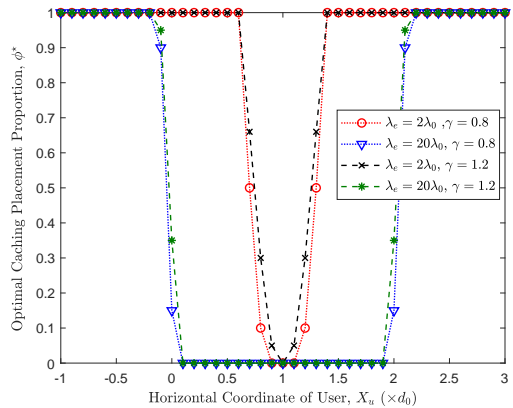}
	\caption{Optimal caching proportion $\phi^{\star}$ vs. $X_u$ for different values of $\lambda_e$ and $\gamma$, with $K = 3$, $\delta = 2$, and $R_s = 1$ bit/s/Hz.}
	\label{OPT_PHI_REAL_B_LE_GAMMA}
\end{figure}
Fig. \ref{OPT_PHI_REAL_B_LE_GAMMA} shows how the optimal caching placement is influenced by the user's location and the eavesdropper density. {When eavesdroppers are distributed sparsely (a small $\lambda_e$) or when the user is located far away from the SBSs, the SCDP is bottlenecked by the connection probability. This suggests that the SBSs should adopt the JT scheme to improve transmission reliability. Hence, the optimal $\phi^{\star}$ goes to one, meaning that caching the MPFs is more beneficial. When the eavesdropper density $\lambda_e$ increases or the user moves close to the SBSs, the secrecy probability becomes the major bottleneck for improving the SCDP. In order to guarantee a certain level of secrecy, the OT scheme is preferred. Hence, the optimal $\phi^{\star}$ approaches zero, suggesting that the SBSs should store the DSFs geographically. For a moderate density of eavesdroppers or SBS-user distance, the optimal $\phi^{\star}$ lies between zero and one, showing that there exists a trade-off between caching the MPFs and the DSFs.} We also find that the optimal $\phi^{\star}$ increases with the content popularity skewness $\gamma$, which coincides with the finding given in Theorem \ref{opt_phi_theorem}.

\begin{figure}[!t]
	\centering
	\includegraphics[width = 3.0in]{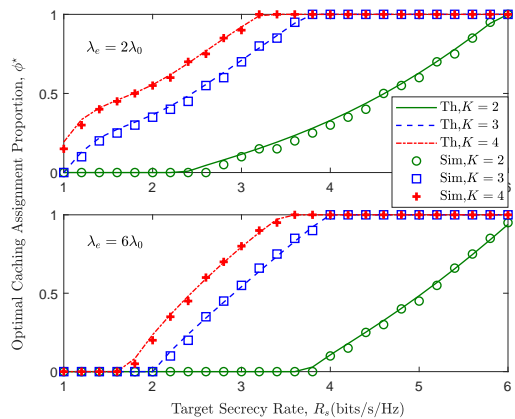}
	\caption{Optimal caching assignment proportion $\phi^{\star}$ vs. $R_s$ for different $K$ and $\lambda_e$, with $\gamma=1.2$, and $\delta = 3$.}
	\label{OPT_PHI_RS_LE_K}
\end{figure}
Fig. \ref{OPT_PHI_RS_LE_K} depicts the optimal caching proportion $\phi^{\star}$ versus the target secrecy rate $R_s$. Monte-Carlo simulated results match well with the theoretical values, verifying the accuracy of the approximation in \eqref{overall_psct}. We show that $\phi^{\star}$ initially is small at the small $R_s$ region and then increases with $R_s$. This is because, to support a large $R_s$, transmission reliability should be adequately ensured and thus the JT scheme becomes favorable. We also observe that $\phi^{\star}$ increases with the number of SBSs $K$. This means, with more cooperative SBSs, the benefit from caching the MPFs along with the JT scheme would be more pronounced.

\begin{figure}[!t]
	\centering
	\includegraphics[width = 3.0in]{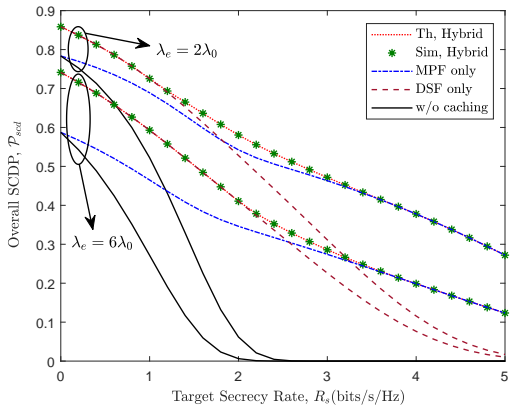}
	\caption{Overall SCDP $\mathcal{P}_{scd}$ vs. $R_s$ for different $\lambda_e$, with $K=3$, $\gamma=1.2$, and $\delta = 3$.}
	\label{SCDP_RS_LE}
\end{figure}
Fig. \ref{SCDP_RS_LE} compares the overall SCDPs for different caching strategies. We show that the proposed hybrid caching strategy outperforms either the MPF-only or the DSF-only scheme. In particular, hybrid caching can provide a remarkable performance gain over MPF-only caching for a small secrecy rate $R_s$ and a large eavesdropper density $\lambda_e$, or over DSF-only caching for a large $R_s$ and a small $\lambda_e$.
In addition, the SCDP performance without caching severely deteriorates.

\section{Conclusions and Future Work}
This paper studies the security issue in a cache-enabled cooperative SCN against randomly located eavesdroppers. We propose a hybrid MPF and DSF caching strategy along with the JT and OT schemes. We derive analytical expressions for the SCDP in each transmission scheme for both NCE and CE scenarios, based on which the optimal transmission rates and the optimal caching assignment are jointly designed for maximizing the overall SCDP. We also develop various insights into the optimal designs. Numerical results demonstrate the superiority of the proposed hybrid caching strategy over the MPF- and DSF-only ones in terms of the SCDP.

This paper opens up several interesting research directions.
{For example, the proposed analysis and design framework can be extended to investigate the cooperative multi-antenna SBSs in cache-enabled SCNs, where artificial jamming signals can be exploited to confound eavesdroppers.} The potential of PLS can be further tapped by performing adaptive designs leveraging the instantaneous CSI of the main channels. Studying the secure content delivery from a network perspective, e.g., considering a multi-cell cellular network, is also an interesting issue. Nevertheless, this might be much more sophisticated, since we should analyze the influence of both random interferers and eavesdroppers. Another possible direction is to consider diverse secrecy attributes for different files, and to explore more intelligent caching strategies.

\appendix
\subsection{Proof of Lemma \ref{lemma_2}}
\label{appendix_lemma_2}
The $i$-th cumulant of the RV $X_k$ is defined as 
\begin{equation}\label{cumulant}
Q_{X_k}^{(i)} = \frac{d^i\mathbb{E}_{X_k}  \left[e^{\omega X_k}\right]}{d\omega^i}\big|_{\omega = 0}.
\end{equation} 
The mean and variance of $X_k$ are $\mu_{X_k} = Q_{X_k}^{(1)}$ and $\sigma_{X_k}^2 = Q_{X_k}^{(2)} - \left(Q_{X_k}^{(1)}\right)^2$. Hence, the parameters $\upsilon_k$ and $\tau_k$ in \eqref{pdf_gamma} can be calculated as $
\upsilon_k ={\mu_{X_k}^2}/{\sigma_{X_k}^2}$ and $\tau_k={\sigma_{X_k}^2}/{\mu_{X_k}}$.
{Recall that $X_k = \sum_{e_j\in\Phi_{e}^R}|h_{j,k}|^2r_{j,k}^{-\alpha}$. Due to the mutual independence among $\{h_{j,k}\}$ for $e_j\in\Phi_{e}^R$, we can express $\mathbb{E}_{X_k}\left[e^{\omega X_k}\right]$ for a fixed $\Phi_{e}^R$ as follows,
\begin{equation}
\mathbb{E}_{X_k}\left[e^{\omega X_k}\right]=\prod_{e_j\in\Phi_{e}^R}\mathbb{E}_{h_{j,k}}\left[e^{-\omega |h_{j,k}|^2r_{j,k}^{-\alpha}}\right].
\end{equation}
Denote $g_{j,k}(\omega)=\mathbb{E}_{h_{j,k}}\left[e^{-\omega |h_{j,k}|^2r_{j,k}^{-\alpha}}\right]$ such that $\mathbb{E}_{X_k}\left[e^{\omega X_k}\right]=\prod_{e_j\in\Phi_{e}^R}g_{j,k}(\omega)$ and $Q_{X_k}^{(i)} = \left({d^ig_{j,k}(\omega)}/{d\omega^i}\right)|_{\omega = 0}$.
Then, we have  $g_{j,k}(\omega)|_{\omega=0}=1$, $\left({dg_{j,k}(\omega)}/{d\omega}\right)|_{\omega=0}=r_{j,k}^{-\alpha}$, and $\left({d^2g_{j,k}(\omega)}/{d\omega^2}\right)|_{\omega=0}=2r_{j,k}^{-2\alpha}$. Based on these results, the first and second cumulants of $X_k$ respectively can be calculated as
\begin{equation}\label{cumulant_1}
Q^{(1)}_{X_k}= \sum_{e_j\in\Phi_{e}^R}\prod_{e_l\in\Phi_{e}^R\setminus{e_j}}\left(g_{l,k}(\omega)\frac{dg_{j,k}(\omega)}{d\omega}\right)\big|_{\omega=0} = \sum_{e_j\in\Phi_{e}^R}r_{j,k}^{-\alpha},
\end{equation}
\begin{align}\label{cumulant_2}
Q^{(2)}_{X_k}&= \sum_{e_j\in\Phi_{e}^R}\left[\prod_{e_l\in\Phi_{e}^R\setminus{e_j}}\left(g_{l,k}(\omega)\frac{d^2g_{j,k}(\omega)}{d\omega^2}\right)+\frac{dg_{j,k}(\omega)}{d\omega}\times\right.\nonumber\\
&  \quad\quad\quad\left.\sum_{e_l\in\Phi_{e}^R\setminus{e_j}}\prod_{e_q\in\Phi_{e}^R\setminus{\{e_j,e_l\}}}\left(g_{q,k}(\omega)\frac{dg_{l,k}(\omega)}{d\omega}\right)\right]\bigg|_{\omega=0}\nonumber\\
& = \sum_{e_j\in\Phi_{e}^R}\left[2r_{j,k}^{-2\alpha} + r_{j,k}^{-\alpha}\sum_{e_l\in\Phi_{e}^R\setminus{e_j}} r_{l,k}^{-\alpha}  \right].
\end{align}
}After some algebraic manipulations, we can obtain $\mu_{X_k} = \sum_{e_j\in\Phi_{e}^R}r_{j,k}^{-\alpha}$ and $\sigma_{X_k}^2 = \sum_{e_j\in\Phi_{e}^R}r_{j,k}^{-2\alpha}$, and substituting them into the expressions of $\upsilon_k$ and $\tau_k$ completes the proof.


\subsection{Proof of Theorem \ref{theorem_opt_beta_jt}}
\label{appendix_opt_beta_jt}
Recall \eqref{dq1}, and the  second derivative ${d^2Q(\beta_e)}/{d\beta_e^2}$ can be given by
\begin{equation}\label{dq2}
\frac{d^2Q(\beta_e)}{d\beta_e^2} = \lambda_e\int_0^\infty\int_0^{2\pi}B^2(r,\theta)e^{-B(r,\theta)\beta_e}rdrd\theta>0.
\end{equation}
This means that $Q(\beta_e)$ is strictly convex on $\beta_e$. Next, we determine the signs of ${dQ(\beta_e)}/{d\beta_e}$ at $\beta_e\rightarrow\infty$ and $\beta_e = 0$. We can prove that $\left({dQ(\beta_e)}/{d\beta_e}\right)|_{\beta_e\rightarrow \infty}=A(1+\beta_s)>0$ from \eqref{dq1}, and $\left({dQ(\beta_e)}/{d\beta_e}\right)|_{\beta_e=0}<0$ by realizing that $\mathcal{P}_{scd}|_{\beta_e=0}=0$ and $\mathcal{P}_{scd}|_{\beta_e>0}>0$ from \eqref{psct_jt}, respectively. Since ${dQ(\beta_e)}/{d\beta_e}$ increases with $\beta_e$, there exists a unique $\beta_e$ that satisfies ${dQ(\beta_e)}/{d\beta_e}=0$, which is the solution to  \eqref{problem_beta_jt}.
	
\subsection{Proof of Theorem \ref{opt_phi_theorem}}
\label{appendix_opt_phi}
Recall \eqref{overall_psct}, and the first and second derivatives of $\mathcal{P}_{scd}$ respectively can be given by
\begin{equation}\label{dP1}
\frac{d\mathcal{P}_{scd}}{d\phi} =\frac{\phi^{-\gamma}\mathcal{\hat P}_{jo}-(K-1)[K-\phi(K-1)]^{-\gamma} \mathcal{\hat P}_{oc}}{\Delta}  ,
\end{equation}
\begin{equation}\label{dP2}
\frac{d^2\mathcal{P}_{scd}}{d\phi^2} = -\frac{\phi^{-\gamma-1}\mathcal{\hat P}_{jo}+(K-1)^2[K-\phi(K-1)]^{-\gamma-1}\mathcal{\hat P}_{oc}}{\Delta\gamma^{-1}},
\end{equation}
where $\Delta \triangleq {L^{\gamma-1}\left(N^{1-\gamma}-1\right)}/{(1-\gamma)}>0$. Since $\mathcal{P}^{\rm JT}_{scd}>\mathcal{P}^{\rm CM}_{scd}$ always holds, we can determine the optimal $\phi$ that
maximizes $\mathcal{P}_{scd}$ by distinguishing three cases:
1) If $\mathcal{P}^{\rm OT}_{scd}<\mathcal{P}^{\rm CM}_{scd}$, we have ${d\mathcal{P}_{scd}}/{d\phi}>0$, i.e., $\mathcal{P}_{scd}$ monotonically increases with $\phi$. Hence, $\mathcal{P}_{scd}$ reaches the maximal value at $\phi^{\star}=1$;
2) If $\mathcal{P}^{\rm OT}_{scd}>\mathcal{P}^{\rm JT}_{scd}$, we have ${d\mathcal{P}_{scd}}/{d\phi}<0$, i.e., $\mathcal{P}_{scd}$ monotonically decreases with $\phi$. Then, the maximal $\mathcal{P}_{scd}$ is achieved at  $\phi^{\star}=0$;
3) If $\mathcal{P}^{\rm JT}_{scd}\ge\mathcal{P}^{\rm OT}_{scd}\ge\mathcal{P}^{\rm CM}_{scd}$, we have ${d^2\mathcal{P}_{scd}}/{d\phi^2}<0$, i.e., $\mathcal{P}_{scd}$ is concave on $\phi$. Since $\left({d\mathcal{P}_{scd}}/{d\phi}\right)|_{\phi=0}>0$, $\mathcal{P}_{scd}$ arrives at the maximal value at $\phi=1$ if $\left({d\mathcal{P}_{scd}}/{d\phi}\right)|_{\phi=1}>0$ or otherwise at the zero-crossing of the derivative ${d\mathcal{P}_{scd}}/{d\phi}$. 
By now, we have completed the proof.

\end{document}